\documentclass[aps,showpacs,prx,superscriptaddress,preprintnumbers,amsmath,amssymb,twocolumn]{revtex4-2}
\usepackage[utf8]{inputenc}
\usepackage[T1]{fontenc}
\usepackage{graphicx}
\usepackage{dcolumn}
\usepackage{appendix}
\usepackage{amsmath}
\usepackage{bm}
\usepackage{amsthm}
\newtheorem{theorem}{Theorem}
\usepackage{soul}
\usepackage[dvipsnames]{xcolor}
\usepackage{booktabs}
\usepackage{bbold}
\usepackage{mathtools}
\usepackage{float}
\usepackage{dsfont}
\usepackage{siunitx}
\usepackage[normalem]{ulem}
\usepackage{natbib}
\usepackage{mathtools}
\usepackage{tikz}
\usepackage{xcolor}
\usetikzlibrary{decorations.pathreplacing,calligraphy}

\newtheorem{lemma}[theorem]{Lemma}
\newtheorem{prop}[theorem]{Proposition}

\usepackage{caption}

\def\proj#1{\ket{#1}\!\bra{#1}}

\def\CC{{\mathbb C}}

\usepackage{subfig}
\usepackage{bbm}
\usepackage{enumerate}

\usepackage{hyperref}
\hypersetup{
    colorlinks=true,
    linkcolor=blue,
    filecolor=magenta,      
    urlcolor=cyan
    }

\usepackage{titlesec}

\usepackage{array}

\usepackage{physics}

\usepackage[final]{changes}

\begin{document}
\allowdisplaybreaks
\title{A semi-definite programming formulation of the device-dependent guessing probability}

\author{Raffaele D'Avino}
\affiliation{ICFO-Institut de Ci\`encies Fot\`oniques, The Barcelona Institute of Science and Technology,
Av. Carl Friedrich Gauss 3, 08860 Castelldefels (Barcelona), Spain}

\author{Aurora Mugnai}
\affiliation{ICFO-Institut de Ci\`encies Fot\`oniques, The Barcelona Institute of Science and Technology,
Av. Carl Friedrich Gauss 3, 08860 Castelldefels (Barcelona), Spain}

\author{Miguel Navascués}
\affiliation{Institute for Quantum Optics and Quantum Information (IQOQI) Vienna,\\
 Austrian Academy of Sciences, Boltzmanngasse 3, Wien 1090, Austria}

\author{Antonio Acín}
\affiliation{ICFO-Institut de Ci\`encies Fot\`oniques, The Barcelona Institute of Science and Technology,
Av. Carl Friedrich Gauss 3, 08860 Castelldefels (Barcelona), Spain}
\affiliation{ICREA - Instituci\'o Catalana de Recerca i Estudis Avan\c{c}ats, 08010 Barcelona, Spain}

\author{Gabriel Senno}
\affiliation{Quside Technologies S.L., C/Esteve Terradas 1, 08860 Castelldefels,
Barcelona, Spain}
\affiliation{ICFO-Institut de Ci\`encies Fot\`oniques, The Barcelona Institute of Science and Technology,
Av. Carl Friedrich Gauss 3, 08860 Castelldefels (Barcelona), Spain}

\begin{abstract}
In quantum mechanics, a measurement applied to a state in general produces some amount of intrinsic randomness. This is not only a fundamental feature of the theory, but is also at the basis of any quantum process to generate random numbers. The simplest of such processes consists of a single, fully charaterized, measurement acting on a single, fully characterized, state. Unfortunately, no general method to estimate the intrinsic randomness produced in such setups is known.  In this work, we address this issue by presenting a semidefinite programming formulation of the maximum probability with which an adversary, Eve, can guess the outcomes of characterized but untrusted prepare-and-measure setups. We then present several applications of this construction. First, we apply our method to a variety of specific setups, allowing us both to benchmark the approach and, more importantly, to determine the exact amount of certifiable randomness in scenarios where only upper bounds were previously available. Then, we show that the presence of entanglement between the device preparing the state and the measurement strictly increases Eve's predictive power, already in the most elementary setup of a binary measurement acting on a qubit state. 
\end{abstract}

\maketitle

\section{Introduction}

Randomness is a fundamental resource in both classical and quantum information processing, underpinning applications ranging from cryptography and numerical simulation to foundational tests of physical theories. Quantum mechanics provides a principled source of randomness: when a measurement is performed on a quantum system prepared in a state that is not an eigenstate of the measured observable, the outcome cannot be predicted with certainty, even in principle. This intrinsic unpredictability distinguishes quantum randomness from classical stochasticity and forms the basis of quantum random number generators (QRNG)~\cite{herrero2017quantum,bera2017randomness,mannalatha2023comprehensive}.

An operational way of quantifying the amount of intrinsic randomness produced in a quantum setup is via the maximum probability with which an adversary, Eve, can guess its outcomes given side information about it. In this work, we focus on the simplest of such processes, in which a system is prepared in a known quantum state and a known measurement is applied to it, defining a fully characterized, or device-dependent, prepare-and-measure (PM) setup. In the most general and realistic situation, both the prepared state and the measurement are noisy and, therefore, Eve is allowed to have side information about both of them. In~\cite{senno2023quantifying}, a general definition for the guessing probability in this scenario was provided. Unfortunately, the resulting definition a priori involves a nonconvex optimization for which no general computational procedure is known.

In this work, leveraging the techniques from \cite{navascues2007bounding}, we prove that the device-dependent guessing probability \cite{senno2023quantifying} can be expressed as a semidefinite program (SDP), establishing a computationally-friendly method to estimate the randomness produced in device-dependent PM setups. This construction is not only relevant from a fundamental perspective, but also for the certification of the simplest QRNG. To illustrate the general applicability of this new computational tool, we apply it to three representative situations. First, we consider a simple scenario in which a system defined by a pure qubit state mixed with depolarizing noise is measured along a complementary basis also mixed with depolarizing noise. We show that the analytic upper bound derived in \cite{curran2025maximal} on the certifiable randomness is tight. Next, we improve the analysis of the amount of intrinsic randomness generated in a scheme targeting an implementation in quantum computers put forward by Berta and Brandão in~\cite{berta2021robust}, tightening the bounds provided therein. And, third, we consider the source-device-independent randomness generation scheme introduced in~\cite{avesani2022unbounded}. Therein, the authors provide a family of $k$-outcome measurements with which, if trusted, unbounded randomness can be extracted from an uncharacterized and untrusted qubit system. 
We use our SDP formulation to show that if Eve is allowed to be correlated with the characterized measurements, the amount of intrinsic randomness already vanishes for $k=4$. As the last result of this manuscript, we show that the presence of entanglement between the preparation and the measurement devices leads to a strictly higher guessing probability than if only classical correlations are allowed, already for a scenario consisting of a binary measurement on a qubit. To our knowledge, this is the simplest example of the consequences of entanglement assistance on randomness certification in PM setups~\cite{d2025entanglement,carceller2025role}.


The remainder of the manuscript is organized as follows. In Section~\ref{sec:prelim} we introduce the PM scenario with characterized but untrusted devices and the guessing probability from~\cite{senno2023quantifying}. In Section~\ref{sec:hierarchy}, we present the SDP formulation used to compute this quantity. Section~\ref{sec:application} illustrates the method through three applications. Finally, in Section~\ref{sec:separability} we use our formulation to show that entanglement between the preparation and measurement devices can enhance Eve’s predictive power.


\section{The device-dependent guessing probability}\label{sec:prelim}

\begin{figure}
\begin{tikzpicture}[scale=1]
    %
	\node[] at (3.5,0.7) {$E$};
	\draw[] (3.2,0.3) rectangle (3.8,1);

  \draw[] (-3.2,1) rectangle (3,-1.8);
  \node[] at (-2.4,-1.6) {Alice's lab};  
   \draw[rounded corners=10pt] (-0.95,-0.75) rectangle (-2.95,0.75);
    \node[] at (-1.95,0) {{\huge $\mathrm{P}$}};

    \draw[dashed,rounded corners=10pt] (2.8,-0.75) rectangle (0.55,0.75);
    \node[] at (0.9,0.4) {{\large $\mathrm{M}$}};
    \draw[rounded corners=5pt] (1.4,-0.5) rectangle (2.7,0.3);
    {\fontsize{0.3cm}{0.5em}\selectfont \node[] at (2.05,-0.1) {$\{\Pi^a_{SM}\}_a$};}
   
	\draw[-,line width=0.05cm] (-0.95,0) to (0.5,0);
\draw[->,line width=0.05cm,dashed] (0.55,0) to (1.4,0);

   {\fontsize{0.3cm}{0.5em}\selectfont\node[] at (0.8,-0.4) {$\rho_M$};}
   
\draw[->,line width=0.05cm,dashed] (1,-0.4) to (1.4,-0.4);

	\node[] at (-0.15,0.2) {{\color{orange}\large $\rho_S$}};

	\draw[-,line width=0.05cm,dashed] (1.8,-0.5) to (1.8,-0.7);
	\draw[->,line width=0.05cm] (1.8,-0.75) to (1.8,-1.3);
    {\fontsize{0.35cm}{0.5em}\selectfont  \node[] at (1.8,-1.5) {$A\sim \Tr[{\color{orange}M^a_{S}}\rho_{S}]$};}

    \draw [decorate,
    decoration = {brace,mirror},line width=0.05cm] (3.9,-1.8) --  (3.9,1);
    \node[] at (4.6,-0.5) {$\ket{\psi}_{SME}$};

  \end{tikzpicture}
  \caption{\textbf{Characterized but untrusted PM setup.} Alice holds a PM device that prepares a system $S$ in some {\color{orange}known state} $\rho_S$. In the measurement box $\mathrm{M}$, an unknown PVM $\{\Pi_{SM}^a\}_a$ is performed, whose outcome statistics are described by the {\color{orange}known POVM} $\{M_S^a\}_a$. Eve holds a purification $\ket{\psi}_{SME}$ of the joint state $\rho_{SM}$ of the system $S$ and the ancillary system $\rho_M$ inside $M$.}\label{fig:scenario} 
\end{figure}
We consider a scenario in which an honest user, Alice, holds a device that can prepare a $d$-dimensional quantum system $S$ in some known state $\rho_S$ and then performs a $k$-outcome measurement on it, described by a known positive operator-valued measure (POVM) $M_S=\{M_S^a\}_{1\leq a \leq k}$. The goal is to determine how random the measurement's outcomes are from the point of view of an adversary, Eve, who holds quantum side-information $E$ about Alice's device. To model Eve’s side information, we follow the approach of \cite{senno2023quantifying} and consider that what Alice \emph{describes} as the POVM $M_S$ is, in fact, a projective measurement (PVM) $\{\Pi_{SM}^a\}_a$ on the system $S$ plus an ancillary system $M$. The state $\rho_M$ of the ancilla together with the PVM $\{\Pi_{SM}^a\}_a$ form a so-called generalized Naimark dilation of $M_S$ \cite{frauchiger2013true}. See Fig.~\ref{fig:scenario} for a schematic description of the scenario.

We assume that Alice only has tomographic access to the system $S$. This implies that all the infinitely many dilations compatible with $M_S$ are indistinguishable for her. Moreover, she can neither distinguish whether system $S$ is correlated with system $M$ or not. Therefore, we assume that the dilation has been chosen by Eve, who also holds a purification $\ket{\psi}_{SME}$ of the state $\rho_{SM}$ of the system+ancilla. Eve may then perform an arbitrary measurement $\{M_E^a\}_a$ on her system to guess the outcome of the measurement on $S$. This leads to the following definition \cite{senno2023quantifying} of \emph{Eve's guessing probability} 
\begin{equation}\label{eq:Pguess}
    \begin{aligned}
        &P_{\text{guess}}(A\mid E,\rho_S,\{M_S^a\}_a):=&&\\ 
        &\quad\max_{\{\Pi_{SM}^{a}\}_{a},\{M_E^e\}_e,\ket{\psi}_{SME}}\sum_{a}\bra{\psi}\Pi_{SM}^{a}\otimes M_E^a \ket{\psi}_{SME}&&\\
        &~\textrm{subject to } &\\
        &\quad\Tr_{ME}[\ket{\psi}\bra{\psi}_{SME}]=\rho_{S},&&\\
        &\quad\Tr_{M}[\Pi_{SM}^{a}(\mathbb{1}_{S}\otimes\Tr_{SE}{[\ket{\psi}\bra{\psi}_{SME}]})]=M_{S}^a\quad \forall a,&&\\
        &\quad\bra{\psi}\Pi_{SM}^{a}\otimes\mathbb{1}_{E}\ket{\psi}=\Tr[M_{S}^{a}\rho_{S}]\quad \forall a.&&
    \end{aligned}
\end{equation}
The constraints above enforce that Eve's attack is compatible with the fully characterized state and measurement, and the observed statistics.

The quantity of interest from a randomness certification perspective is the conditional min-entropy,
\begin{align}\label{eq:min-entropy}
&H_{\min}(A|E,\rho_S,\{M_S^a\}_a) :=&\nonumber\\
&\qquad\qquad\qquad\qquad-\log_2 P_{\text{guess}}(A|E,\rho_S,\{M_S^a\}_a).
\end{align}
which measures the amount of randomness in the measurement outcome $A$, from the perspective of an adversary holding quantum side information $E$ \cite{konig2009operational}.

\medskip

Eq.~\eqref{eq:Pguess} can be reformulated by introducing subnormalized states of the system+ancilla conditioned on Eve's measurement outcomes,
\begin{equation}
\tilde{\sigma}_{SM}^{a} := \Tr_E\big[(\mathbb{1}_{SM} \otimes M_E^a) \ket{\psi}\bra{\psi}_{SME}\big].
\end{equation}
The guessing probability can then be expressed as
\begin{equation}\label{eq:pguess_SM}
\begin{aligned}
&P_{\text{guess}}(A\mid E,\rho_S,\{M_S^a\}_a) = &&\\
&\quad \max_{\{\Pi_{SM}^{a}\}_{a}, \{\tilde{\sigma}_{SM}^{e}\}_e} \sum_a \Tr[\Pi_{SM}^{a} \tilde{\sigma}_{SM}^{a}] &&\\
&\text{subject to} &&\\
&\quad \Tr_M\Big[\sum_e \tilde{\sigma}_{SM}^e\Big] = \rho_S, &&\\
&\quad \Tr_M\Big[\Pi_{SM}^a (\mathbb{1}_S \otimes \Tr_S[\sum_e \tilde{\sigma}_{SM}^e])\Big] = M_S^a, \quad \forall a, &&\\
&\quad \Tr[\Pi_{SM}^a \sum_e \tilde{\sigma}_{SM}^e] = \Tr[M_S^a \rho_S], \quad \forall a. &&
\end{aligned}
\end{equation}

While this equivalent representation remains a noncommutative polynomial optimization problem, it provides a convenient starting point for the SDP construction presented in the next section.

\section{SDP formulation}\label{sec:hierarchy}
In this section, we describe our main contribution: an SDP formulation for Eq.~\eqref{eq:pguess_SM}.
The simple but key observation underlying our construction is that the projective measurement $\{\Pi_{SM}^a\}_a$ in any dilation of $\{M_S^a\}_a$ can be expressed in block form with respect to an orthonormal basis $\{\ket{i}\}_{i=1}^{d}$ of $\mathcal{H}_S$ as
\begin{equation}\label{eq:proj_meas}
    \Pi_{SM}^a = \sum_{i,j=1}^{d} \ket{i}\!\bra{j}_S \otimes K_M^{i,j|a}
    \qquad \forall a ,
\end{equation}
where $d={\rm dim}(\mathcal{H}_S)$ and $\{K_M^{i,j|a}\}_{i,j,a}$ is a set of operators acting on the auxiliary system (of unrestricted dimension) $M$. 
This observation allows us to apply the techniques 
from \cite{navascues2014characterization} and consider 
the \emph{generalized moment matrices}%
\addtocounter{footnote}{1} \footnotemark[\value{footnote}]
\begin{equation}\label{eq:mom_matrix}
    \Gamma_{\text{e}}:=\sum_{s, t\in \cal O}\sum_{i,j=1}^{d}
    c^{i,j}_e(s^\dagger t)\,\ket{i}\bra{j}\otimes\ket{s}\bra{t},
\end{equation}
\footnotetext[\value{footnote}]{We abuse notation by using 
$s$ and $t$ to denote both the operators in $\mathcal{O}$ 
and the labels indexing the corresponding basis vectors 
$\ket{s}$, $\ket{t}$ of the auxiliary Hilbert space.}
with 
\begin{align}\label{eq:coeff_gamma}
c^{i,j}_e(s^\dagger t)=\Tr\left[(\ket{j}\bra{i}\otimes s^\dagger t)  \tilde{\sigma}^e_{SM}\right].
\end{align}
Here, the outer sum runs over the finite set of operators
\begin{equation}\label{eq:monomial_set}
{\cal O}:=\{\mathbb{1}\}\cup \{K^{i,j|a}_M:a=1,...,k;\, i,j=1,...,d\}.
\end{equation}
As described in \cite{navascues2014characterization}, the matrices $\Gamma_{\text{e}}$ can be seen as the result of the following mapping
\begin{equation}
    \tilde{\sigma}_{SM}^e\longrightarrow
    \Tr_M\!\left[
        (\mathbb{1}_S \otimes \Lambda)\,
        \tilde{\sigma}_{SM}^e\,
        (\mathbb{1}_S \otimes \Lambda^\dagger)
    \right]
    =:\Gamma_e,
\end{equation}
with
$\Lambda
    =
    \sum_s
    s^\dagger \otimes \ket{s}.$
This, in particular, implies their positive-semidefiniteness, $\Gamma_{\text{e}}\geq 0$ for all $e$.

In Appendix ~\ref{App:mom_matr_formulation}, we show that the objective function and the constraints in Eq. \eqref{eq:pguess_SM} can be expressed as linear functions of the coefficients of the matrices $\{\Gamma_e\}_e$. Adopting the shorthand $(K_M^{i,j|a})^\dagger= K_M^{j,i|a}$ for convenience, this provides the following \emph{semidefinite relaxation}
\begin{equation}\label{eq:pguess_matrices}
\begin{aligned}
&P_{\text{guess}}(A\mid E,\rho_S,\{M_S^a\}_a) = &&\\
&\quad \max_{\{\Gamma_e\}_e}\sum_{a,i,j}
    \bra{j}\!\bra{\mathbb{1}}\,
    \Gamma_a\,
    \ket{i}\!|K_M^{i,j|a}\rangle &&\\
&\text{subject to} &&\\
&\quad \Gamma_e\succeq 0 \;\, \forall e,&&\\
&\quad \sum_e
    \bra{i}\!\bra{\mathbb{1}}\,
    \Gamma_e\,
    \ket{j}\!\ket{\mathbb{1}} = \bra{i}\rho_S\ket{j} \;\, \forall i,j,&&\\
&\quad \sum_{e,l}
    \bra{l}\bra{\mathbb{1}}\,
    \Gamma_e\,
    \ket{l}\!|K_M^{i,j| a}\rangle = \bra{i} M_S^a \ket{j} \;\, \forall a,i,j, &&\\
&\quad \sum_{e,i,j}
    \bra{j}\!\bra{\mathbb{1}}\,
    \Gamma_e\,
    \ket{i}\!|K_M^{i,j|a}\rangle = \Tr[M_S^a \rho_S]\;\, \forall a, &&\\
&\quad \bra{m}\!\bra{\mathbb{1}}\,
    \Gamma_e\,
    \ket{n}\!|s\rangle = \bra{m}\!\bra{s^\dagger}\,
    \Gamma_e\,
    \ket{n}\!|\mathbb{1}\rangle  &&\\&\quad\qquad\qquad\qquad\qquad\qquad\quad\quad\forall s\in\mathcal{O}\quad\forall\, e,m,n, &&\\
&\quad\sum_a\bra{m}\!\bra{\mathbb{1}}\,
    \Gamma_e\,
    \ket{n}\!|K_{M}^{i,j|a}\rangle = \delta_{ij}\bra{m}\!\bra{\mathbb{1}}\,
    \Gamma_e\,
    \ket{n}\!|\mathbb{1}\rangle &&\\&\quad\qquad\qquad\qquad\qquad\qquad\qquad\qquad\;\,\forall\, e,m,n,i,j,&&\\
&\quad
\sum_l \bra{m}\langle (K_M^{l,i|a})^\dagger|\,
    \Gamma_e\,
    \ket{n}\!|K_M^{l,j|b}\rangle =  &&\\
    &\quad\qquad\delta_{ab}\bra{m}\!\bra{\mathbb{1}}\,
    \Gamma_e\,
    \ket{n}\!|K_M^{i,j|a}\rangle\;\,\forall\, a,b,e,m,n,i,j.&&
\end{aligned}
\end{equation}
The second, third and fourth constraints are direct analogs of the ones in Eq. \eqref{eq:pguess_SM}. The last three constraints, on the other hand, follow from the hermiticity, the completeness and projectivity of $\{\Pi_{SM}^a\}_a$ respectively 
(see Appendix~\ref{App:mom_matr_formulation}).

\medskip

The previous discussion implies that from a feasible solution of Eq. \eqref{eq:pguess_SM} it is possible to construct a feasible solution of Eq.~\eqref{eq:pguess_matrices}. Our main result, Theorem \ref{th:SDP_formulation} below, shows that this connection also goes in the opposite direction and, thus, Eq. \eqref{eq:pguess_matrices} provides a \emph{complete characterization} of the device-dependent guessing probability.

\begin{theorem}\label{th:SDP_formulation}
Let $\{\Gamma_e\}_e$ be a feasible solution to Eq.~\eqref{eq:pguess_matrices} achieving some objective value $c$. Then, there exist a finite-dimensional Hilbert space $\mathcal{H}_M$, operators $K_M^{i,j|a}$ and (possibly subnormalized) states $\tilde{\sigma}^e_{SM}$ such that $\langle \{\sigma_{SM}^e\}_e,\{\Pi_{SM}^a\}_a\rangle$, with $\Pi_{SM}^a$ as in Eq. \eqref{eq:proj_meas}, is a feasible solution to Eq.~\eqref{eq:pguess_SM} achieving the same objective value $c$.
\end{theorem}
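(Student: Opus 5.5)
Given a feasible $\{\Gamma_e\}_e$, we reconstruct a Hilbert space $\mathcal{H}_M$ and operators from the moment data by a Gram/GNS-type construction. Only the first-level moments ($\mathcal{O}$ has $1+kd^2$ elements) are available, so we cannot hope to recover an algebra. Instead we build the $K$'s directly as finite matrices acting on a space spanned by vectors indexed by $\mathcal{O}$.

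\textbf{Step 1: Gram vectors.} Since $\Gamma_e\succeq 0$, write $\Gamma_e = V_e^\dagger V_e$. Each column index $(j,t)$, with $j\in\{1,\dots,d\}$ and $t\in\mathcal{O}$, then corresponds to a vector $\ket{v^e_{j,t}}$ with
\begin{equation*}
\bra{i}\bra{s}\Gamma_e\ket{j}\ket{t}=\langle v^e_{i,s}|v^e_{j,t}\rangle .
\end{equation*}
Interpret $\ket{v^e_{j,t}}$ as $(\bra{j}\otimes t\otimes\mathbb{1})\ket{\tilde\psi_e}$, a purification of $\tilde\sigma^e$ acted on by $t$. This suggests setting $\mathcal{H}_M\otimes\mathcal{H}_{\rm aux}=\bigoplus_e \mathrm{span}\{v^e_{j,t}\}$. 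A cleaner choice is to take $\mathcal{H}_M$ to be the direct sum over $e$ of the spaces $\mathrm{span}_t$, with the $S$-index carried separately. Then
\begin{equation*}
\tilde\sigma^e_{SM}=\ket{\phi_e}\bra{\phi_e},\qquad \ket{\phi_e}=\sum_j \ket{j}_S\otimes\ket{v^e_{j,\mathbb{1}}}\otimes\ket{e},
\end{equation*}
so the $e$-blocks are orthogonal and Eve's information is encoded in the flag $\ket{e}$.

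\textbf{Step 2: defining $K$.} Define $K^{i,j|a}_M$ on the span of $\{v^e_{l,\mathbb{1}}\}$ by $K^{i,j|a}\ket{v^e_{l,\mathbb{1}}}\mapsto\ket{v^e_{l,K^{i,j|a}}}$; it acts block-diagonally in $e$. The hermiticity constraint $\langle m,\mathbb{1}|\Gamma|n,s\rangle=\langle m,s^\dagger|\Gamma|n,\mathbb{1}\rangle$ makes the map consistent with $(K^{i,j|a})^\dagger=K^{j,i|a}$ on these vectors. We then want to extend the $K$'s to all of $\mathcal{H}_M$ so that $\Pi^a=\sum\ket{i}\bra{j}\otimes K^{i,j|a}$ is a genuine PVM. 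The completeness and projectivity constraints show that the compressions of $\Pi^a$ onto the subspace $W=\mathrm{span}\{\ket{m}\otimes v^e_{n,\mathbb{1}}\}$ behave like a PVM on the relevant vectors. Concretely, the $\Gamma$-entries of $\Pi^a\Pi^b$ between level-1 and level-0 vectors equal $\delta_{ab}\Pi^a$. I would then define the $\Pi^a$ restricted to $W'=W+\Pi^aW$ and verify that they are orthogonal projectors on $W'$ summing to the identity. The $\Gamma$ entries fix the inner products $\langle \Pi^a w,\Pi^b w'\rangle=\delta_{ab}\langle w,\Pi^a w'\rangle$ and $\sum_a\Pi^a w=w$, which is exactly what is needed to define a PVM via Naimark-type completion: set $\Pi^a$ as the projector onto $\overline{\Pi^a W}$ after checking that these subspaces are mutually orthogonal and together span $W'$. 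Finally, extend by hand on the orthogonal complement, for instance by putting $\Pi^1=\mathbb{1}$ there.

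\textbf{Step 3: constraints and objective.} The objective and the three linear constraints of Eq.~\eqref{eq:pguess_SM} involve only $\mathrm{Tr}[\Pi^a\tilde\sigma^e]$, the marginals of $\sum_e\tilde\sigma^e$, and $\mathrm{Tr}_M[\Pi^a(\mathbb{1}\otimes\sigma_M)]$. Each of these is a single inner product $\langle\cdot,\Pi^a\cdot\rangle$ between vectors in $W$, so it equals the corresponding $\Gamma$ entry appearing in Eq.~\eqref{eq:pguess_matrices}. The measurement constraint requires some care: $\sigma_M=\mathrm{Tr}_S\sum_e\tilde\sigma^e$ is supported on $W$-type vectors, and the partial trace matches the third SDP constraint after reindexing. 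The extension in Step 2 changes nothing on $W$, so all values are preserved, and $\mathcal{H}_M$ is finite-dimensional because the total dimension is at most $\sum_e d(1+kd^2)$.

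\textbf{Main obstacle.} The hardest part is Step 2: showing that the block structure is consistent, i.e.\ that $\Pi^a$ defined through the $K$'s is simultaneously of the tensor form \eqref{eq:proj_meas} and a projector. The defining data fix $\Pi^a$ only on $W$, while the operators $K$ must act on $\mathcal{H}_M$ independently of the $S$ index. To get this, I would first try the larger space $\mathcal{H}_M=\bigoplus_e\mathrm{span}\{v^e_{j,t}\}_{j,t}$. On it, each $K^{i,j|a}$ is specified as a map sending the $S$-index $l$ component appropriately. I would then verify the $d\times d$ block identities $\sum_l K^{l,i|a\dagger}K^{l,j|b}=\delta_{ab}K^{i,j|a}$ on the relevant vectors, using the last SDP constraint, and close the construction with an isometric dilation to enforce these identities exactly.
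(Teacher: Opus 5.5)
Your skeleton matches the paper's proof: Gram vectors of each $\Gamma_e$, a pure state $\sum_j\ket{j}\otimes(\cdot)$, a flag $\ket{e}$ gluing the $k$ realizations, and projectors onto mutually orthogonal subspaces with the leftover complement assigned to one outcome. However, the step that makes it work is not carried out, and the obstacle you single out at the end does not exist. There is no tension between the block form \eqref{eq:proj_meas} and projectivity, because every operator on $\CC^d\otimes\mathcal{H}_M$ has that form. So do not define the $K$'s first. (Your map $K^{i,j|a}\ket{v^e_{l,\mathbb{1}}}\mapsto\ket{v^e_{l,K^{i,j|a}}}$ is asserted to be well defined without argument, although the $v^e_{l,\mathbb{1}}$ may be linearly dependent.) Instead, set $\ket{\Phi^j_{m,a}}:=\sum_i\ket{i}\otimes\ket{\psi_m,K^{i,j|a}}$ and let $\Pi^a$ be the projector onto $H_a:=\mathrm{span}\{\ket{\Phi^j_{m,a}}\}_{j,m}$, absorbing the complement into $\Pi^1$. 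Only then read off $K^{i,j|a}:=(\bra{i}\otimes\mathbb{1})\Pi^a(\ket{j}\otimes\mathbb{1})$. The larger space spanned by all Gram vectors, which you mention at the end, is already the right $\mathcal{H}_M$ for each $e$, and no dilation is needed. What must be proved is $K^{i,j|a}\ket{\psi_m,\mathbb{1}}=\ket{\psi_m,K^{i,j|a}}$; all of your Step 3 reduces to this. It follows from two computations you only gesture at. First, the projectivity constraint gives $\braket{\Phi^i_{m,a}}{\Phi^j_{n,b}}=\delta_{ab}\,c^{nm}(K^{i,j|b})$, so the $H_a$ are mutually orthogonal. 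Second, completeness together with projectivity gives $\|\ket{j}\ket{\psi_m,\mathbb{1}}-\sum_a\ket{\Phi^j_{m,a}}\|^2=c^{mm}(\mathbb{1})-\sum_a c^{mm}(K^{j,j|a})=0$. Hence $\Pi^a\ket{j}\ket{\psi_m,\mathbb{1}}=\ket{\Phi^j_{m,a}}$, and applying $\bra{i}\otimes\mathbb{1}$ gives the required action, and with it, a posteriori, the well-definedness of your original map. Your ``$\sum_a\Pi^a w=w$'' is exactly this null-vector identity, but it needs the norm computation, not just an appeal to the $\Gamma$ entries. The ``isometric dilation to enforce these identities exactly'' is unnecessary, and as stated it is unsupported, since nothing in your proposal shows it would preserve the moments.

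There is also an index error in Step 1 that invalidates Step 3 as written. With $\bra{i}\bra{s}\Gamma_e\ket{j}\ket{t}=\braket{v^e_{i,s}}{v^e_{j,t}}$ and $\ket{\phi_e}=\sum_j\ket{j}\otimes\ket{v^e_{j,\mathbb{1}}}\otimes\ket{e}$, you get $\bra{i}\Tr_M\tilde\sigma^e_{SM}\ket{j}=\braket{v^e_{j,\mathbb{1}}}{v^e_{i,\mathbb{1}}}=\bra{j}\bra{\mathbb{1}}\Gamma_e\ket{i}\ket{\mathbb{1}}$. Hence $\Tr_M\sum_e\tilde\sigma^e_{SM}=\rho_S^T$, which differs from $\rho_S$ whenever $\rho_S$ is not real. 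The objective and the statistics constraint are likewise evaluated with the two $S$ indices interchanged. The vectors $(\bra{j}\otimes t)\ket{\tilde\psi_e}$ you have in mind do not have Gram matrix $\Gamma_e$. Merely swapping the $S$ labels of the Gram vectors is not an option either: it would require $\Gamma_e^{T_S}\succeq 0$, which is precisely the PPT condition of Section~\ref{sec:separability} and is not imposed by the SDP. The paper repairs this by complex conjugation, $\ket{\psi_m,s^\dagger}:=\overline{\ket{m,s}}$, which yields $c^{mn}(t^\dagger s)=\braket{\psi_n,t}{\psi_m,s}$, with the hermiticity constraint used to identify the entries labelled by $s$ and $s^\dagger$. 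The vectors $\ket{\psi_m,\cdot}$ in the previous paragraph are meant in this conjugated sense.
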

The proof of this theorem is deferred to Appendix \ref{app:proof_SDP_formulation}.

\section{Applications}\label{sec:application}
In this section, we apply the SDP introduced in Section~\ref{sec:hierarchy} to different setups.

\subsection{Depolarized state and measurement}
We first consider a setting that was studied in Ref.~\cite{curran2025maximal}, where both the state and the measurement are equally affected by depolarizing noise.
More formally, the state and the measurement are assumed to be

\begin{equation}\label{eq:noisy_state_meas}
\begin{aligned}
    \rho_S =\,& \varepsilon \ket{+}\bra{+} + (1-\varepsilon)\mathbb{1}/2, \\
    \{M_S^a\}_a =\,& \{\varepsilon\ket{0}\bra{0} + (1-\varepsilon)\mathbb{1}/2,\;
                   \\
                   &\;\,\varepsilon\ket{1}\bra{1}+ (1-\varepsilon)\mathbb{1}/2\},
\end{aligned}
\end{equation}
where $\varepsilon\in[0,1]$ and $|+\rangle=(|0\rangle+|1\rangle)/\sqrt{2}$. The authors prove the following analytical lower bound to the guessing probability
\begin{equation}\label{eq:pguess-lower-bound-global-noise}
    P_{\text{guess}}(A\mid E,\rho_S,\{M_S^a\}_a) \geq \frac{1}{2}\left(1 + 2\sqrt{\delta(1-\delta)}\right),
\end{equation}
with $\varepsilon = 1 - \sqrt{1-\delta}$.
The relation between the noise parameters $\delta$ and $\varepsilon$ is chosen so that the resulting measurement statistics are identical 
in the two scenarios. 

\medskip

In Fig.~\ref{fig:p_guess_Fionnuala}, we plot the value of the conditional min-entropy in Eq.~\eqref{eq:min-entropy}
provided by the SDP in Eq. \eqref{eq:pguess_matrices} for this setting. As can be seen, it matches the analytic value obtained from Eq. \eqref{eq:pguess-lower-bound-global-noise}, establishing the optimality of the latter.

\begin{figure}[h!]
    \centering
    \includegraphics[width=1\linewidth]{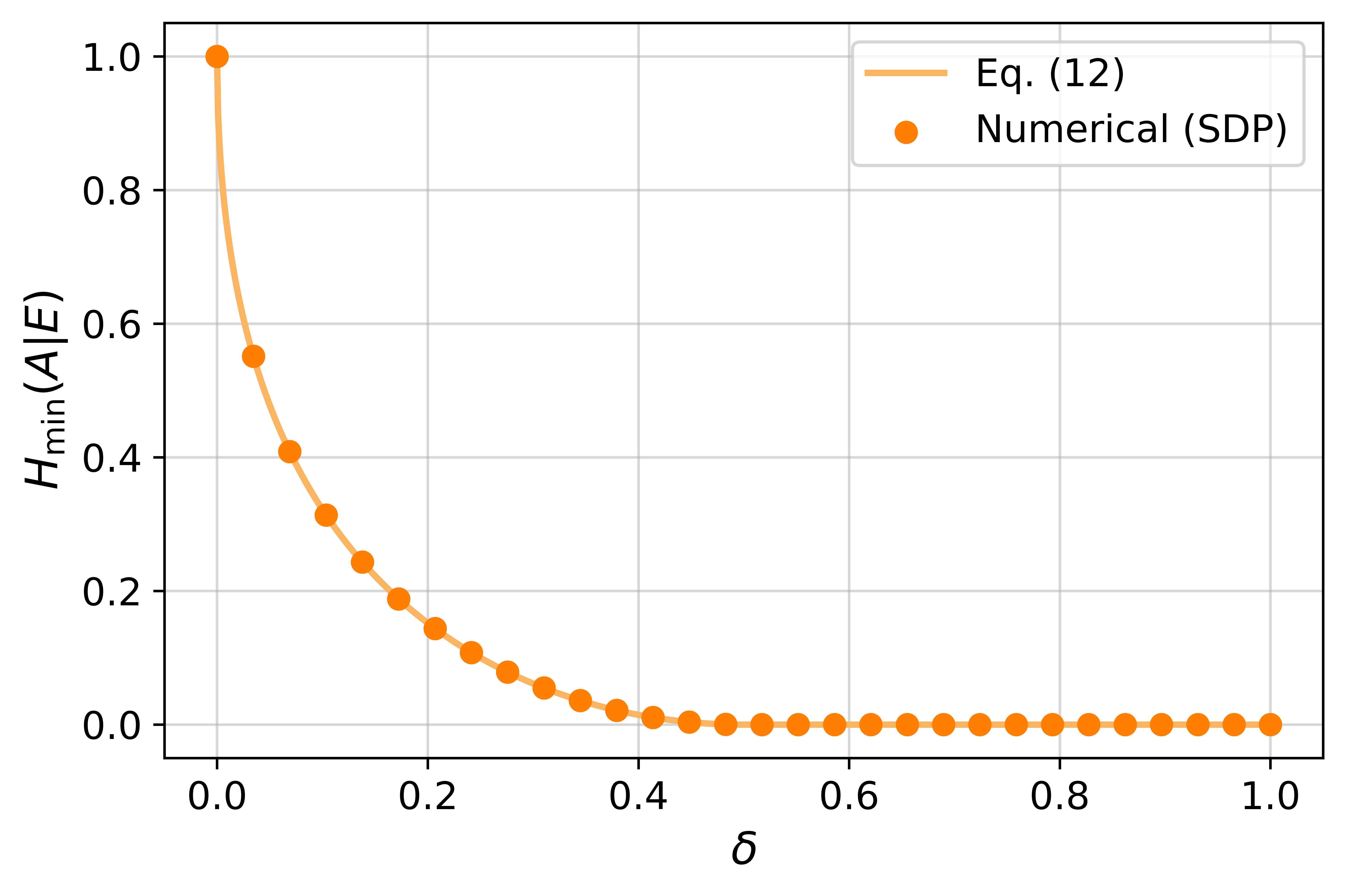}
    \caption{$H_\mathrm{min}(A|E)$ obtained via our SDP for the setting in Eq.~\eqref{eq:noisy_state_meas} (dots), compared with the analytic results for the corresponding quantity (solid line).}
    \label{fig:p_guess_Fionnuala}
\end{figure}

\subsection{Depolarized state and inefficient detector}
In Ref.~\cite{berta2021robust}, the authors put forward a scheme for randomness generation with quantum computers. Specifically, they consider 
a setup preparing the state $\ket{+}$, a uniform superposition of the computational basis states, which is affected by depolarizing noise of strength $\delta$. The system is later subjected to a binary measurement parametrized by $\mu\in(0,1)$, where smaller values of $\mu$ correspond to a stronger bias towards the read-out of the basis state $\ketbra{0}$. More formally, they consider the following state and measurement
\begin{equation}
\begin{aligned}
    \rho_S &= (1-\delta) \ket{+}\bra{+} + \delta\,\mathbb{1}/2, \\
    \{M_S^a\}_a &= \{\mathbb{1}-\mu |1\rangle\langle1|,\mu |1\rangle\langle1|\}.
\end{aligned}
\end{equation}
To estimate the conditional min-entropy, following the prescription in \cite{frauchiger2013true}, the authors fix a specific dilation $\mathcal{D}$ of the measurement $\{M_S^a\}_a$, given by
\begin{equation}\label{eq:dilation_pguess}
\begin{aligned}
    \{\Pi_{SM}^a\}_a &= \{|0\rangle\langle0|\otimes|0\rangle\langle0|,\;
    |0\rangle\langle0|\otimes|1\rangle\langle1| + \mathbb{1}\otimes|1\rangle\langle1|\}, \\
    \rho_M &= (1-\mu)|0\rangle\langle0| + \mu |1\rangle\langle1|.
\end{aligned}
\end{equation}
This is one of the many possible dilations, not necessarily optimal from Eve's point of view. We compare the result obtained using this specific dilation with the unrestricted one using Eq. \eqref{eq:pguess_matrices}. 
In Fig.~\ref{fig:hmin_BB} we plot $H_{\text{min}}(A|E,\mathcal{D}) - H_{\text{min}}(A|E)$ as a function of the parameters $\delta$ and $\mu$. We observe that for many values of the parameters this quantity is positive, implying an overestimation (albeit small) of the amount of intrinsic randomness generated.

\begin{figure}[h!]
    \centering
    \includegraphics[width=0.9\linewidth]{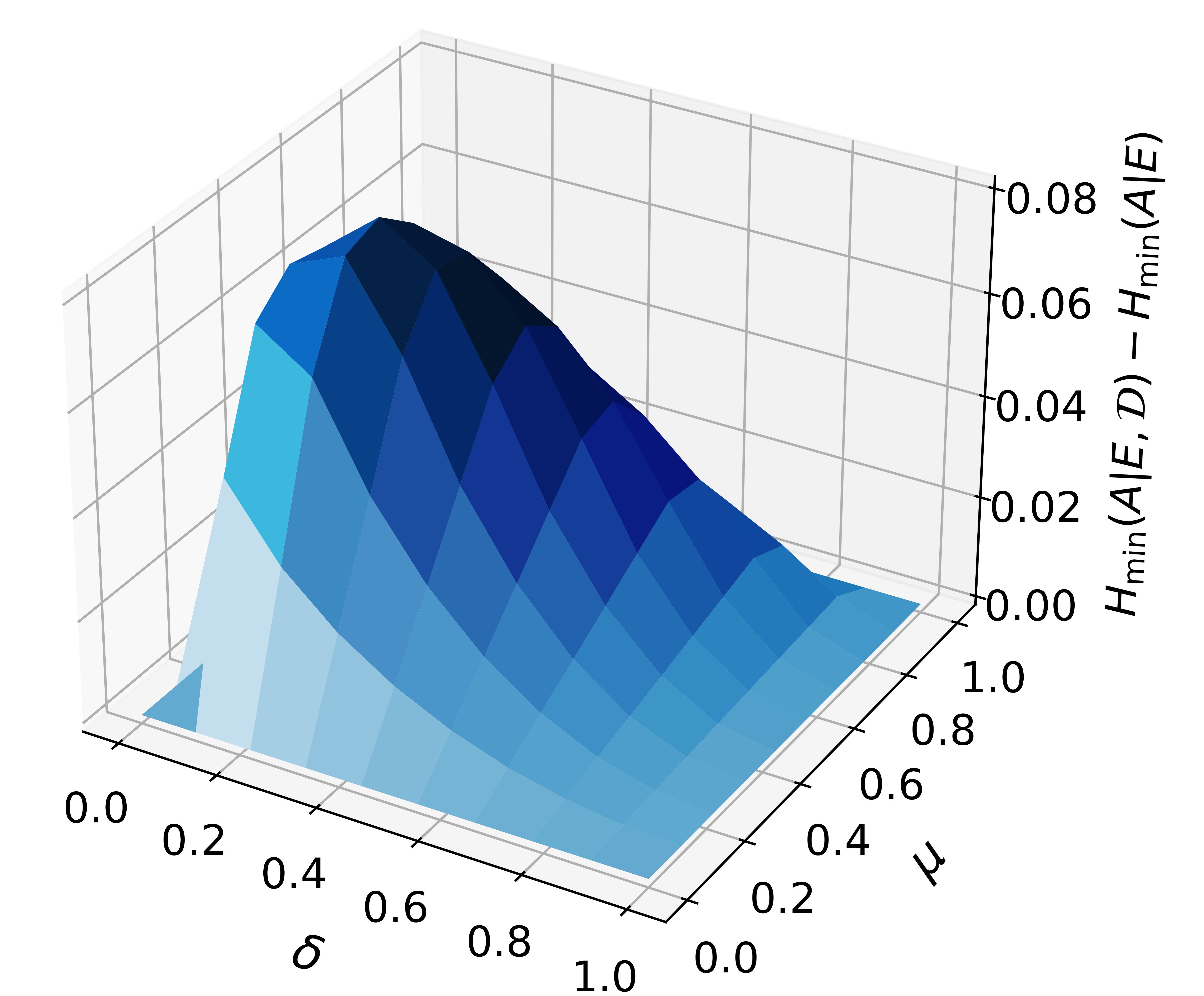}
    \caption{Difference between the conditional min-entropy $H_{\mathrm{min}}(A|E)$ evaluated on the specific dilation in Eq.~\eqref{eq:dilation_pguess} and the value obtained via our SDP optimizing over all possible dilations.}
    \label{fig:hmin_BB}
\end{figure}
\subsection{Uncharacterized state and equiangular planar measurement}


As a final application, we consider the source-device-independent scenario introduced in Ref.~\cite{avesani2022unbounded}. In this framework, no assumption is made on the prepared qubit state $\rho_S$, while the measurement is assumed to be fully characterized and described by an equiangular $k$-outcome POVM $\{M_S^{a}\}_{1\leq a\leq k}$ defined on the $XZ$ plane of the Bloch sphere, with elements
\begin{equation}\label{eq:regular_POVM}
    M_S^{a}=\frac{1}{k}\bigl(\mathbb{1}+\vec{u}_a\cdot\vec{\sigma}\bigr),
\end{equation}
where $\vec{u}_a = (\cos\theta_a,0,\sin\theta_a)$, with $\theta_a=2\pi a/k$, are the vertices of a regular polygon in the $XZ$ plane and $\vec{\sigma}=(\sigma_x,\sigma_y,\sigma_z)$ denotes the vector of Pauli operators. Moreover, the measurement is assumed to be fully trusted, in the sense that the eavesdropper cannot have any side information. This is an important assumption, as any qubit measurement with more than four outcomes is not extremal and, therefore, Eve can be coupled to it in an adversarial model. Ref.~\cite{avesani2022unbounded} assumes that this is not the case. In fact, for measurements restricted to the $XZ$ plane, measurements cease to be extremal when $k>3$.

Under the mentioned assumptions, it is shown in Ref.~\cite{avesani2022unbounded} that the min-entropy diverges, suggesting that unbounded randomness can be certified with qubit systems. In particular, the conditional min-entropy satisfies the lower bound
\begin{equation}
m_n \equiv \min_{\rho_S} H_{\min}(A | E) = \log_2(k) - 1,
\end{equation}
which is represented by the continuous line in Fig.~\ref{fig:Hmin_vs_n}.

To assess whether this conclusion remains valid in a scenario with untrusted measurements, we consider a modified version of the SDP in Eq. \eqref{eq:pguess_SM}. The only difference is that the input state $\rho_S$ is no longer fixed, and is instead included as an optimization variable. Accordingly, the maximization is extended to $\rho_S$, subject to the constraints $\rho_S \succeq 0$ and $\operatorname{Tr}[\rho_S]=1$.

The results reported in Fig.~\ref{fig:Hmin_vs_n} show that, for $k > 3$, assuming a trusted measurement leads to a significant overestimation of the min-entropy. This is a consequence of the lack of extremality of these measurements.

\begin{figure}[h!]
    \centering
    \includegraphics[width=1\linewidth]{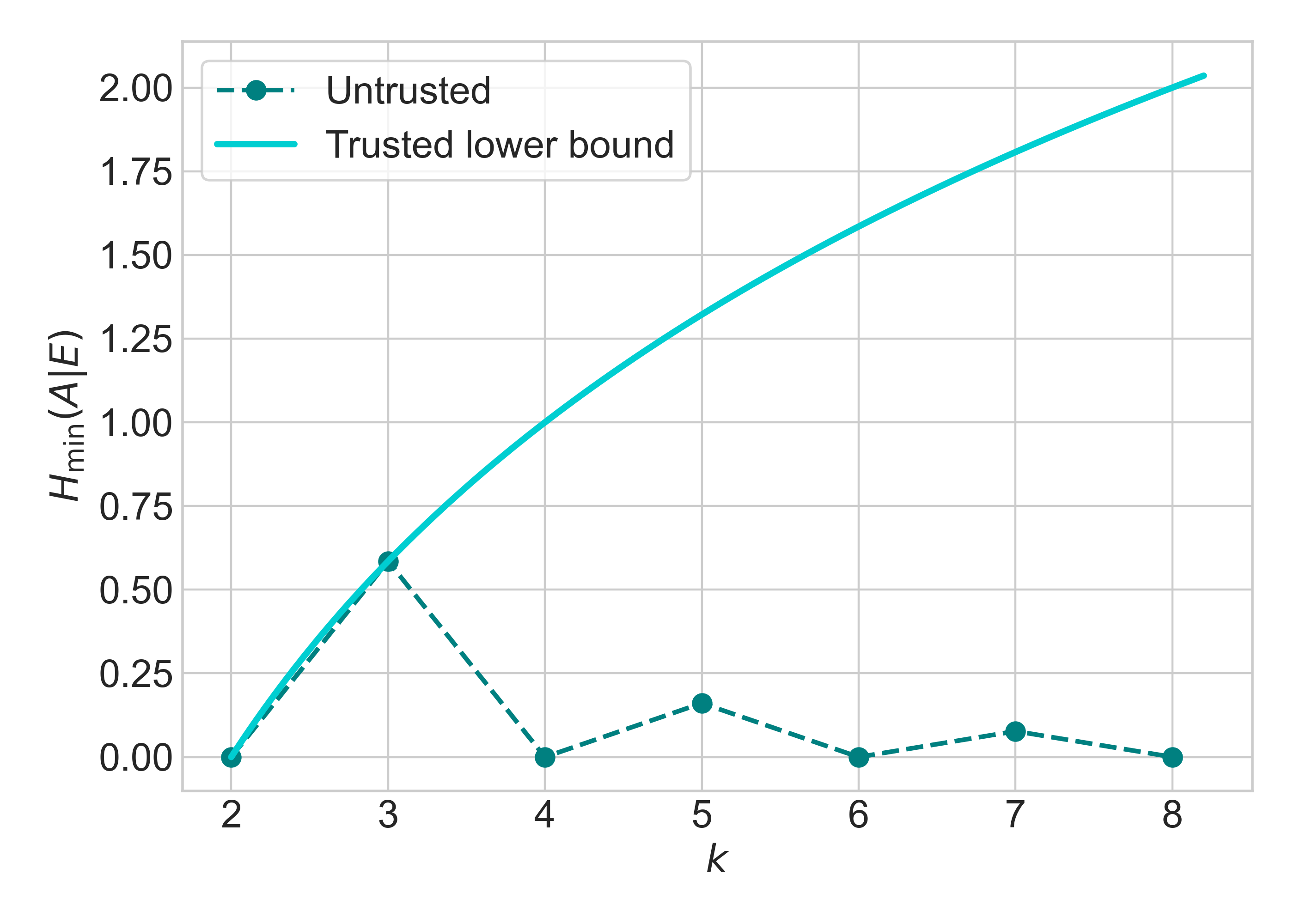}
    \caption{Conditional min-entropy $H_{\min}(A|E)$ as a function of the number of outcomes $k$ for an equiangular POVM in the $XZ$ plane of the Bloch sphere. The solid line represents the theoretical lower bound $m_n = \log_2(n)-1$ assuming fully trusted measurements, while the points show the corresponding values obtained via our SDP formulation, without assuming trusted measurements. For $k > 3$, assuming trusted measurements significantly overestimates the min-entropy.}
    \label{fig:Hmin_vs_n}
\end{figure}

Extending our analysis, we take into account realistic depolarizing noise in the measurement, which further decreases the min-entropy values we observe. We model the measurement by replacing each POVM element $M_S^{a}$ with
\begin{equation}
\tilde{M}_S^{a} = (1-\delta) M_S^{a} + \delta\, \frac{\operatorname{Tr}[M_S^{a}]}{d} \, \mathbb{1}_S,
\end{equation}
where $\delta \in [0,1]$ is the noisy parameter, with $\delta = 1$ corresponding to a fully depolarized measurement.
For $k=3$, for which the measurements is extremal in the absence of noise, Fig.~\ref{fig:Hmin_vs_delta} shows that the min-entropy already vanishes at $\delta \simeq 0.073$.

\begin{figure}[h!]
    \centering
    \includegraphics[width=1\linewidth]{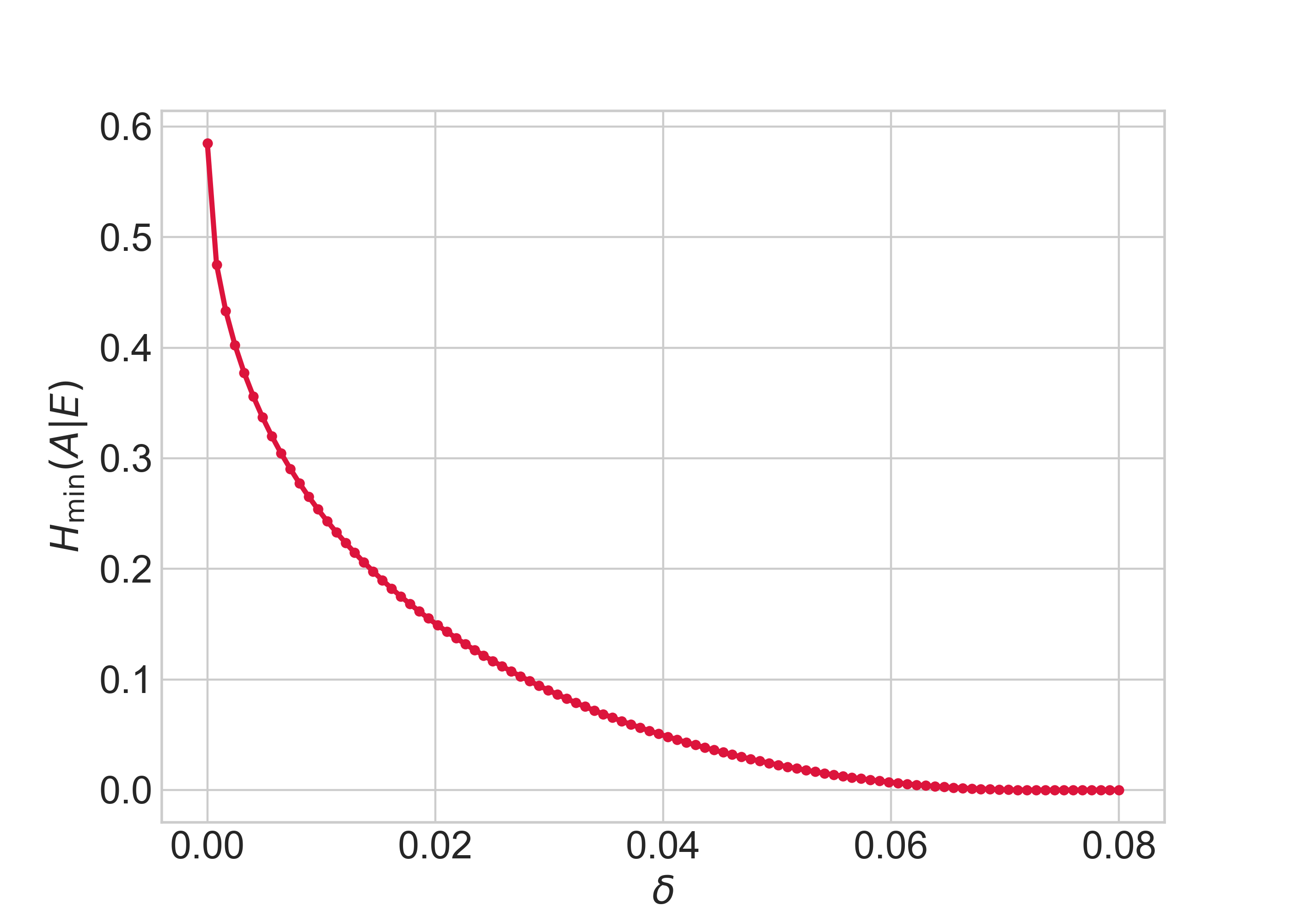}
    \caption{Conditional min-entropy $H_{\min}(A|E)$ for $k=3$ as a function of the depolarizing noise parameter $\delta$ of the measurement. The min-entropy decreases rapidly as the noise increases and vanishes around $\delta \simeq 0.073$.}
    \label{fig:Hmin_vs_delta}
\end{figure}

\section{Entanglement-assistance increases Eve's predicting power}\label{sec:separability}
The global state $\rho_{SM}$ of the measured system $S$ and ancillary system $M$ in Eq.~\eqref{eq:Pguess} is allowed to be correlated. As explained above, the rationale behind this decision follows from the fact that the honest user, Alice, is assumed to perform independent tomographic calibration processes on the prepared system $S$ and implemented measurement $M$, and is thus unable to detect correlations between these two devices. This assumption, however, contrasts with the one performed in the independently introduced randomness quantification framework from \cite{dai2023intrinsic}. While also based on the generalized Naimark dilations from \cite{frauchiger2013true}, the framework in \cite{dai2023intrinsic} deviates from \cite{senno2023quantifying} in assuming that: 1) systems $S$ and $M$ are in a product state $\rho_S\otimes \sigma_M$; and 2) Eve holds systems $E$ and $F$ purifying, respectively, $S$ and $M$, but she cannot measure them jointly. The results from \cite[Thms. 3 and 4]{senno2023quantifying} show that in a setting where 1) holds, assuming 2) incurs in a restriction to Eve's power. Naturally, this leaves open the question of whether 1) alone has any operational consequences. In this section, we answer this question in the affirmative.

\medskip

Let $P_{{\rm guess}}^{\rm sep}$ be as in Eq.~\eqref{eq:Pguess} but with the added constraint of a separable $\rho_{SM}$. In this section, we show that there exist states $\rho_S^*$ and POVMs $M_S^*$ for which
\begin{align*}
P_{{\rm guess}}(A|E,\rho_S^*,M_S^*)>P_{{\rm guess}}^{\rm sep}(A|E,\rho_S^*,M_S^*).
\end{align*}

We make use of the following lemma, whose proof we defer to Appendix~\ref{App:separability}.

\begin{lemma}
Given a state $\rho_{SM}$ that is separable, then
\[
\Gamma^{T_S}(\rho_{SM}) \geq 0 .
\]
\end{lemma}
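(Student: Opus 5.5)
The plan is to use the fact that $\Gamma(\rho_{SM})$ is obtained from $\rho_{SM}$ by a map that acts only on $M$ and is completely positive. Then we check that this map commutes with partial transposition on $S$. Concretely, $\Gamma(\rho_{SM}) := \Tr_M[(\mathbb{1}_S\otimes\Lambda)\rho_{SM}(\mathbb{1}_S\otimes\Lambda^\dagger)]$, with $\Lambda=\sum_s s^\dagger\otimes\ket{s}$ mapping $\mathcal{H}_M$ into $\mathcal{H}_M\otimes\mathbb{C}^{|\mathcal{O}|}$. Define
\[
\Phi(X_M):=\Tr_M[\Lambda X_M\Lambda^\dagger].
\]
This is a completely positive map from operators on $M$ to $|\mathcal{O}|\times|\mathcal{O}|$ matrices, since it is a conjugation followed by a partial trace. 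With this notation, $\Gamma(\rho_{SM})=(\mathrm{id}_S\otimes\Phi)(\rho_{SM})$.

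First, insert the separable decomposition $\rho_{SM}=\sum_k p_k\,\alpha_S^k\otimes\beta_M^k$, with $p_k\ge0$ and $\alpha^k,\beta^k\succeq0$. Linearity of $\mathrm{id}_S\otimes\Phi$ gives
\[
\Gamma(\rho_{SM})=\sum_k p_k\,\alpha_S^k\otimes\Phi(\beta_M^k).
\]

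Second, take the partial transpose on the first tensor factor. The transpose is with respect to the basis $\{\ket{i}\}$ used in Eq.~\eqref{eq:mom_matrix}, so that $T_S$ acts as $\ket{i}\bra{j}\mapsto\ket{j}\bra{i}$ on the $S$ indices. This yields
\[
\Gamma^{T_S}(\rho_{SM})=\sum_k p_k\,(\alpha_S^k)^T\otimes\Phi(\beta_M^k).
\]
Each $(\alpha^k)^T$ is positive semidefinite, because transposition preserves the spectrum. Each $\Phi(\beta^k)$ is positive semidefinite by positivity of $\Phi$; explicitly, $\bra{v}\Phi(\beta)\ket{v}=\Tr[\beta\, \tilde v^\dagger\tilde v]\ge0$ with $\tilde v=\sum_s \bar v_s s^\dagger$. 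Tensor products of PSD matrices are PSD, and nonnegative combinations of PSD matrices are PSD, so $\Gamma^{T_S}\ge0$.

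The only delicate point is bookkeeping: confirming that the coefficients $c^{i,j}(s^\dagger t)=\Tr[(\ket{j}\bra{i}\otimes s^\dagger t)\rho_{SM}]$ indeed factor as $\bra{i}\alpha\ket{j}\,\Tr[s^\dagger t\,\beta]$ for a product state. This fixes which index ordering the partial transpose refers to. Whichever convention is used, it only reorders $\alpha$ versus $\alpha^T$, and both are PSD, so the conclusion is unaffected. The same argument applies verbatim to each subnormalized $\tilde\sigma^e_{SM}$, because separability of $\rho_{SM}$ is inherited by the conditional states when Eve's measurement acts on $E$ only. This is the form presumably needed for adding the constraints $\Gamma_e^{T_S}\succeq0$ to Eq.~\eqref{eq:pguess_matrices}.
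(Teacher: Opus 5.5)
Your proof of the lemma is correct. It is organized a little differently from the paper's. The paper never decomposes $\rho_{SM}$. Instead it checks, at the level of the coefficients $c^{i,j}$, that the moment map commutes with partial transposition on $S$, i.e.\ $\Gamma(\rho_{SM})^{T_S}=\Gamma(\rho_{SM}^{T_S})$. This is your observation that $\Gamma=\mathrm{id}_S\otimes\Phi$ acts only on $M$. The paper then combines separable $\Rightarrow$ PPT with positivity of the moment map on positive operators of $SM$; that step implicitly uses complete positivity of $\Phi$, which holds since it is a conjugation followed by a partial trace.

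Your route replaces the PPT criterion with a term-by-term computation on the product decomposition. It only needs $\Phi(\beta^k)\succeq 0$ for single-system states and $(\alpha^k)^T\succeq 0$, so it is slightly more elementary. The paper's route proves more: it uses only $\rho_{SM}^{T_S}\succeq 0$. Hence $\Gamma^{T_S}\succeq 0$ holds for every PPT $\rho_{SM}$, and the PPT-constrained SDP upper-bounds the guessing probability restricted to PPT (not only separable) states $\rho_{SM}$.

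Your closing remark is wrong and should be dropped. Separability of $\rho_{SM}$ is \emph{not} inherited by Eve's conditional states. By measuring her purifying system, Eve can realize any decomposition $\rho_{SM}=\sum_e\tilde\sigma^e_{SM}$ into positive parts, and separable states admit entangled decompositions. For example, take $\rho_{SM}=\mathbb{1}/4$ on two qubits with a maximally entangled purification $\ket{\psi}_{SME}$. Measuring $E$ in the Bell basis leaves $SM$ in Bell states, $\tilde\sigma^e_{SM}=\frac14\proj{\beta_e}$. So the per-outcome constraints $\Gamma_e^{T_S}\succeq 0$ are not implied by separability of $\rho_{SM}$, and adding them would not give a certified upper bound on $P_{\rm guess}^{\rm sep}$. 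This is why the paper imposes the PPT condition only on $\Gamma=\sum_e\Gamma_e$. By linearity of the moment map, that sum equals $\Gamma(\rho_{SM})$, which is exactly the object the lemma covers.
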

Here, $T_S$ denotes the partial transposition with respect to subsystem $S$. Therefore, by constraining the sum $\Gamma=\sum_e \Gamma_e$ of the matrices defined in Eq.~\eqref{eq:mom_matrix} to have a positive partial transpose (PPT), we obtain upper bounds to $P_{{\rm guess}}^{\rm sep}$.\\

As a concrete example, for the qubit state $\rho_S$ and the binary measurement $M_S=\{M_S^0,\mathbb{1}_S-M_S^0\}$ given by
\begin{equation}
\rho_S =
\begin{pmatrix}
0.2 & 0.3 \\
0.3 & 0.8
\end{pmatrix},
\qquad
M_S^0 =
\begin{pmatrix}
0.6 & r e^{i\phi} \\
r e^{-i\phi} & 0.1
\end{pmatrix},
\end{equation}
with $r=0.24$, $\phi=0.89$, we find that
\begin{align*}
&P_{{\rm guess}}^{\rm sep} \leq 0.909672 < P_{{\rm guess}} = 0.910376.
\end{align*}
In spite of the gap for this particular example being quite small, it suffices to show that, already in the simplest scenario given by a single qubit and a binary measurement, assuming no entanglement between the system $S$ and the measurement register $M$ can overestimate the amount of intrinsic randomness generated.

\section{Conclusion}

In this work, we studied the amount of intrinsic randomness that can be generated in a characterized but untrusted prepare-and-measure scenario. In particular, we introduced a semidefinite programming formulation of the conditional min-entropy of Eve possessing quantum side information about the experimental setup.

We apply our method in three different scenarios. This allows us to benchmark our formulation and, more importantly, to determine the exact amount of certifiable randomness in settings where previously only upper bounds were known, often relying on additional assumptions to make the conditional min-entropy computable. Moreover, we used our framework to show that restricting the correlations between the preparation and measurement devices to be classical strictly reduces Eve's predictive power.

Our results provide a general tool for quantifying randomness in realistic, noise-affected quantum devices. They not only advance the current understanding of randomness generation in the presence of quantum side information, but also provide a general recipe to certify randomness in the simplest device-dependent QRNG consisting of a known measurement implemented on a known quantum state. 

\section*{ACKNOWLEDGEMENTS}

This work is supported by the Government of Spain (Severo Ochoa CEX2019-000910-S and FUNQIP), Fundació Cellex, Fundació Mir-Puig, Generalitat de Catalunya (CERCA program), the European Union (QSNP, 101114043), and the AXA Chair in Quantum Information Science. 
RD acknowledges funding from the European Union’s Horizon Europe research and innovation programme under the Marie Skłodowska-Curie grant agreement No. 101081441 and thanks Ranieri V. Nery for useful discussions.

\section*{Data and code availability}
The Python code used to perform the numerical optimizations and generate the figures presented in this work is available at \url{https://github.com/raffaeledavino/dd-guessing-probability}.

\bibliography{Reference.bib}

\clearpage
\onecolumngrid
\appendix

\begin{center}
\large{\textbf{\textsc{Appendix}}}

\end{center}

\section{Moment matrix formulation}\label{App:mom_matr_formulation}
We now consider the optimisation problem in Eq.~\eqref{eq:pguess_SM}, which we rewrite here for convenience
\begin{equation}
    P_{\text{guess}}^{Q}=\max_{\{\Pi_{SM}^{a}\}_{a},\{\tilde{\sigma}_{SM}^{e}\}_{e},}\sum_{a=1}^{k}\Tr[\Pi_{SM}^{a}\tilde{\sigma}_{SM}^{a}],
\end{equation}
subject to
\begin{equation}
\begin{aligned}
    &\Tr_{M}\left[\sum_{e=1}^{k}\tilde{\sigma}_{SM}^{e}\right]=\rho_{S},\\
    &\Tr_{M}\left[\Pi_{SM}^{a}\left(\mathbb{1}_{S}\otimes\Tr_{S}{\Big[\sum_{e=1}^{k}\tilde{\sigma}_{SM}^{e}\Big]}\right)\right]=M^a_{S}\quad &\forall a\in\{1,\dots,k\},\\
    &\Tr\left[\Pi_{SM}^{a}\sum_{e=1}^{k}\tilde{\sigma}^{e}_{SM}\right]=\Tr\left[M_{S}^{a}\rho_{S}\right]\quad&\forall a\in\{1,\dots,k\}.
\end{aligned}
\end{equation}
Following the approach in~\cite{navascues2014characterization}, we rewrite this optimisation in terms of moment matrices via the mapping
\begin{equation}
    \tilde{\sigma}_{SM}^e\longrightarrow
    \Tr_M\!\left[
        (\mathbb{1}_S \otimes \Lambda)\,
        \tilde{\sigma}_{SM}^e\,
        (\mathbb{1}_S \otimes \Lambda^\dagger)
    \right]
    :=\Gamma_e,
\end{equation}
where $\Lambda=\sum_{s\in\cal{O}} s^\dag\otimes\ket{s}$, with $\mathcal O:=\{\mathbb{1}\}\cup \{K^{i,j|a}_M:a=1,...,k;\, i,j=1,...,d\}$, and $e\in\{1,\dots,k\}$. Using this definition, the moment matrices $\{\Gamma_{e}\}_\text{e}$ can be written explicitly as
\begin{equation}
\begin{aligned}
        \Gamma_e&=\sum_{s,t\in\mathcal{O}}\Tr_M\left[(\mathbb{1}_S\otimes s^\dagger\otimes\ket{s})\tilde{\sigma}_{SM}^e(\mathbb{1}_S\otimes t\otimes\bra{t})\right]\\
    &=\sum_{s,t\in\mathcal{O}}\Tr_M\left[(\mathbb{1}_S\otimes s^\dagger\otimes\ket{s})\left(\sum_{i,j,k,l}c^e_{ijkl}\ket{i}\bra{j}\otimes\ket{k}\bra{l}\right)(\mathbb{1}_S\otimes t\otimes\bra{t})\right]\\
    &=\sum_{s,t\in\mathcal{O}}\sum_{i,j}\ket{i}\bra{j}\Tr\left[(\ket{j}\bra{i}\otimes s^\dagger t)  \tilde{\sigma}_{SM}^e\right]\otimes\ket{s}\bra{t}
\end{aligned}
\qquad\forall e\in\{1,\dots,k\}.
\end{equation}
Let's write the objective and the constraints in function of the matrices $\{\Gamma_e\}_e$.

\begin{itemize}
\item \textbf{Objective function.}
\begin{equation}\label{eq:objective}
\begin{aligned}
    \sum_a \sum_{i,j}
    \bra{j}\!\bra{\mathbb{1}}\,
    \Gamma_a\,
    \ket{i}\!|K_M^{i,j|a}\rangle
    &=
    \sum_a \sum_{i,j}
    \Tr\!\left[
        \ket{i}\!\bra{j}
        \otimes
        K_M^{i,j|a}
        \tilde{\sigma}_{SM}^a
    \right] \\
    &=
    \sum_a
    \Tr\!\left[
        \Pi_{SM}^a
        \tilde{\sigma}_{SM}^a
    \right].
\end{aligned}
\end{equation}

\item \textbf{Constraint $\Tr_M[\sum_e \tilde{\sigma}_{SM}^e] = \rho_S$.}
\begin{equation}\label{eq:constr_state}
\begin{aligned}
    \sum_e
    \bra{i}\!\bra{\mathbb{1}}\,
    \Gamma_e\,
    \ket{j}\!\ket{\mathbb{1}}
    &=
    \sum_e
    \Tr\!\left[
        (\ket{j}\!\bra{i} \otimes \mathbb{1})
        \tilde{\sigma}_{SM}^e
    \right] \\
    &=
    \bra{i}\rho_S\ket{j}
\end{aligned}
\qquad
\forall i,j.
\end{equation}

\item \textbf{Constraint $\Tr_M[\Pi_{SM}^a(\mathbb{1}_S \otimes \Tr_S[\sum_e \tilde{\sigma}_{SM}^e])] = M_S^a$.}
\begin{equation}\label{eq:constr_meas}
\begin{aligned}
    \sum_e \sum_l
    \bra{l}\!\bra{\mathbb{1}}\,
    \Gamma_e\,
    \ket{l}\!|K_M^{i,j|a}\rangle
    &=
    \sum_e
    \Tr\!\left[
        (\mathbb{1} \otimes K_M^{i,j|a})
        \tilde{\sigma}_{SM}^e
    \right] \\
    &=
    \bra{i} M_S^a \ket{j}
\end{aligned}
\qquad
\forall a,i,j.
\end{equation}

This follows from the identity
\begin{equation}
\begin{aligned}
    M_S^a
    &=
    \Tr_M\!\left[
        \Pi_{SM}^a
        (\mathbb{1}_S \otimes \Tr_S[\sum_e \tilde{\sigma}_{SM}^e])
    \right] \\
    &=
    \sum_{i,j}
    \ket{i}\!\bra{j}
    \Tr\!\left[
        (\mathbb{1} \otimes K_M^{i,j|a})
        \sum_e \tilde{\sigma}_{SM}^e
    \right].
\end{aligned}
\end{equation}

\item \textbf{Constraint $\Tr[\Pi_{SM}^a \sum_e \tilde{\sigma}_{SM}^e] = \Tr[M_S^a \rho_S]$.}
\begin{equation}\label{eq:constr_statistics}
\begin{aligned}
    \sum_e \sum_{i,j}
    \bra{j}\!\bra{\mathbb{1}}\,
    \Gamma_e\,
    \ket{i}\!|K_M^{i,j|a}\rangle
    &=
    \sum_e
    \Tr\!\left[
        \Pi_{SM}^a
        \tilde{\sigma}_{SM}^e
    \right]
\end{aligned}
\qquad
\forall a.
\end{equation}
\end{itemize}
In addition to the constraints discussed above, further conditions must be imposed on the moment matrices to enforce the algebraic properties of the operators $\{\Pi_{SM}^a\}_a$.
First, since $\{\Pi_{SM}^a\}_a$ are Hermitian, i.e.\ $(\Pi_{SM}^a)^\dagger = \Pi_{SM}^a$ $\forall a$, we require
$\bigl( K_M^{i,j|a} \bigr)^\dagger = K_M^{j,i|a}\;\,\forall\, i,j,a$. In our formulation, this is enforced through
    \begin{equation}
        \bra{m}\!\bra{\mathbb{1}}\,
    \Gamma_e\,
    \ket{n}\!|s\rangle = \bra{m}\!\bra{s^\dagger}\,
    \Gamma_e\,
    \ket{n}\!|\mathbb{1}\rangle  \qquad\;\,\forall s\in\mathcal{O}\quad\forall\, e,m,n.
    \end{equation}
Second, the completeness condition $\sum_a \Pi_{SM}^a=\mathbb{1}_{SM}$ implies
\begin{equation}
    {}_{S}\bra{i}\sum_a\Pi_{SM}^a\ket{j}_S=\sum_{a} K_{M}^{i,j|a}=\delta_{i,j}\mathbb{1}_{M}.
\end{equation}
At the level of the moments, this translates to
\begin{equation}
\begin{aligned}
    \sum_a\bra{m}\!\bra{\mathbb{1}}\,
    \Gamma_e\,
    \ket{n}\!|K_{M}^{i,j|a}\rangle = &\sum_a\Tr[\ket{m}\bra{n}\otimes K_M^{i,j|a}\tilde{\sigma}_{SM}^e]\\
    =&\,\delta_{ij}\Tr[\ket{m}\bra{n}\otimes \mathbb{1}_{M}\tilde{\sigma}_{SM}^e]= \delta_{ij}\bra{m}\!\bra{\mathbb{1}}\,
    \Gamma_e\,
    \ket{n}\!|\mathbb{1}\rangle \qquad\forall\, e,m,n,i,j.
\end{aligned}
\end{equation}
Finally, the projectivity condition $(\Pi_{SM}^a)^2 = \Pi_{SM}^a$ is equivalent to
\begin{equation}
    {}_{S}\bra{i}(\Pi_{SM}^a)^2\ket{j}_S=\sum_lK_M^{i,l|a}K_M^{l,j|a}=K_M^{i,j|a}={}_{S}\bra{i}\Pi_{SM}^a\ket{j}_S.
\end{equation}
In terms of expectation values, we impose
\begin{equation}\label{eq:constr_proj_mom_matr}
\begin{aligned}
    \sum_l \bra{m}\langle (K_M^{l,i|a})^\dagger|\,
    \Gamma_e\,
    \ket{n}\!|K_M^{l,j|b}\rangle&=\Tr\!\left[
        \left(\sum_{l}\ket{m}\bra{n}\otimes K_M^{i,l|a}K_M^{l,j|b} \right)\tilde{\sigma}_{SM}^e
    \right]\\
    &=
    \delta_{ab}\Tr\!\left[
        \left(\ket{m}\bra{n}\otimes K_M^{i,j|a}\right) \tilde{\sigma}_{SM}^e
    \right]\\
    &=\delta_{ab}\bra{m}\!\bra{\mathbb{1}}\,
    \Gamma_e\,
    \ket{n}\!|K_M^{i,j|a}\rangle
    \qquad
    \forall\, a,b,e,m,n,i,j.
\end{aligned}
\end{equation}

\section{Moment matrix separability lemma}\label{App:separability}
\begin{lemma}
    Given a state $\rho_{SM}$ that is separable, then $\Gamma^{T_S}(\rho_{SM})\geq 0.$
\end{lemma}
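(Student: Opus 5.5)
The plan is to exploit the fact that the map $\rho_{SM}\mapsto\Gamma(\rho_{SM})$ is linear in $\rho_{SM}$ and acts trivially on the $S$ factor. Throughout, $\Gamma(\rho_{SM}):=\Tr_M[(\mathbb{1}_S\otimes\Lambda)\rho_{SM}(\mathbb{1}_S\otimes\Lambda^\dagger)]$, with $\Lambda=\sum_{s\in\mathcal O}s^\dagger\otimes\ket{s}$ as in Appendix~\ref{App:mom_matr_formulation}. Since $\Gamma=\sum_e\Gamma_e$ is obtained by applying this map to $\sum_e\tilde\sigma^e_{SM}=\rho_{SM}$, it suffices to prove the claim for $\Gamma(\rho_{SM})$ itself.

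First, treat product states. Take $\rho_{SM}=\alpha_S\otimes\beta_M$ with $\alpha_S,\beta_M\succeq 0$. Using the coefficient formula \eqref{eq:coeff_gamma}, the entries factorize as
\begin{equation*}
c^{i,j}(s^\dagger t)=\bra{i}\alpha_S\ket{j}\,\Tr[s^\dagger t\,\beta_M].
\end{equation*}
Hence
\begin{equation*}
\Gamma(\alpha_S\otimes\beta_M)=\alpha_S\otimes G(\beta_M),\qquad G(\beta_M):=\sum_{s,t\in\mathcal O}\Tr[s^\dagger t\,\beta_M]\ket{s}\bra{t}.
\end{equation*}
Equivalently, $G(\beta_M)=\Tr_M[\Lambda\beta_M\Lambda^\dagger]$, so $G(\beta_M)\succeq0$. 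This is an ordinary moment matrix. Taking the partial transpose on $S$ then gives
\begin{equation*}
\Gamma^{T_S}=\alpha_S^{T}\otimes G(\beta_M).
\end{equation*}
Since $\alpha_S^T=\overline{\alpha_S}\succeq0$ and the tensor product of PSD matrices is PSD, this matrix is PSD.

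Second, extend to separable states. A separable state can be written $\rho_{SM}=\sum_\lambda p_\lambda\,\alpha^\lambda_S\otimes\beta^\lambda_M$ with $p_\lambda\ge0$. The map $\rho\mapsto\Gamma(\rho)$ is linear, and so is the partial transpose $T_S$. Therefore
\begin{equation*}
\Gamma^{T_S}(\rho_{SM})=\sum_\lambda p_\lambda\,(\alpha^\lambda_S)^T\otimes G(\beta^\lambda_M),
\end{equation*}
which is a conic combination of PSD matrices and hence PSD.

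The main obstacle is bookkeeping: making sure the $S$-index ordering in $\Gamma$ is consistent with the factorization. The coefficient $c^{i,j}$ multiplies $\ket{i}\bra{j}$ but pairs with $\ket{j}\bra{i}$ inside the trace, so $\Tr[\ket{j}\bra{i}\alpha_S]=\bra{i}\alpha_S\ket{j}$. This confirms the first tensor factor is $\alpha_S$ rather than its transpose. Either way the argument goes through, because both $\alpha_S$ and $\alpha_S^T$ are PSD.

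A second point to handle is the infinite-dimensional $M$. Here it suffices to note that the $K_M^{i,j|a}$ are bounded operators, so every trace involved is finite. For general separable states, given as limits of finite convex combinations, positivity is preserved under limits because the PSD cone is closed and the map is continuous in trace norm.
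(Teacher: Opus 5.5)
Your proof is correct, but it takes a different route from the paper. The paper never decomposes the state. It notes that the moment map acts trivially on $S$, so partial transposition commutes with it: $\Gamma(\rho_{SM})^{T_S}=\Gamma(\rho_{SM}^{T_S})$, via $\Tr[(\ket{i}\bra{j}\otimes s)\rho^{T_S}]=\Tr[(\ket{j}\bra{i}\otimes s)\rho]$. It then combines the Peres criterion $\rho_{SM}^{T_S}\geq 0$ with positivity of $\rho\mapsto\Gamma(\rho)$. You instead factorize the map on product states, $\Gamma(\alpha_S\otimes\beta_M)=\alpha_S\otimes G(\beta_M)$, and conclude by convexity, which re-derives the Peres argument directly at the level of moment matrices.

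Each route gives something extra. The paper's route is stronger on the input side: $\Gamma^{T_S}\geq 0$ holds for every PPT $\rho_{SM}$. So the PPT-constrained SDP actually upper-bounds the guessing probability over all attacks with PPT (possibly bound-entangled) $\rho_{SM}$, not only separable ones. Your route is stronger on the output side: $\Gamma(\rho_{SM})=\sum_\lambda p_\lambda\,\alpha^\lambda_S\otimes G(\beta^\lambda_M)$ is itself a separable operator on $\mathbb{C}^d\otimes\mathbb{C}^{|\mathcal{O}|}$. Any separability test (e.g.\ symmetric extensions) could therefore be imposed on $\Gamma$ to tighten the bound on $P^{\rm sep}_{\rm guess}$.

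One harmless slip: with $\Lambda=\sum_s s^\dagger\otimes\ket{s}$, the $(s,t)$ entry of $\Tr_M[\Lambda\beta_M\Lambda^\dagger]$ is $\Tr[ts^\dagger\beta_M]$, not $\Tr[s^\dagger t\beta_M]$. You inherited this inconsistency from the paper's Eq.~\eqref{eq:coeff_gamma}. Both matrices are PSD; the one with entries $\Tr[s^\dagger t\beta_M]$ is the Hilbert--Schmidt Gram matrix of $\{s\beta_M^{1/2}\}_{s\in\mathcal{O}}$. Your conclusion is unaffected.
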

\begin{proof}
Recall that
\begin{align*}
\Gamma(\rho_{SM})=\sum_{i,j}\ket{i}\bra{j}\otimes \sum_{s,t\in\mathcal{O}}c^{i,j}_{t^\dagger s}(\rho_{SM})\ket{s}\bra{t},
\end{align*}
with
\begin{align*}
    c^{i,j}_{s}(\rho_{SM})=\Tr\left[(\ket{j}\bra{i}\otimes s)  \rho_{SM}\right].
\end{align*}
Next, from the fact that
\begin{align*}
    \Tr[(\ket{i}\bra{j}_S\otimes s)\sigma_{SM}^{T_S}]=\Tr[(\ket{j}\bra{i}_S\otimes s)\sigma_{SM}]
\end{align*}
we have that
\begin{align*}
\Gamma(\rho_{SM})^{T_S}&=\sum_{i,j}\ket{j}\bra{i}\otimes \sum_{s,t\in\mathcal{O{}}}c^{ij}_{t^\dagger s}(\rho_{SM})\ket{s}\bra{t}\\
&=\sum_{i,j}\ket{j}\bra{i}\otimes \sum_{s,t\in\mathcal{O}}c^{ji}_{t^\dagger s}(\rho_{SM}^{T_S})\ket{s}\bra{t}\\
&=\Gamma(\rho_{SM}^{T_S}).
\end{align*}
The lemma then follows from the facts that $\rho_{SM}^{T_S}\geq 0$ for a separable $\rho_{SM}$ and $\rho\geq 0\implies \Gamma(\rho)\geq 0$.
\end{proof}

\section{Sufficiency of the SDP relaxation}\label{app:proof_SDP_formulation}
This section is devoted to the proof of Theorem \ref{th:SDP_formulation} from the main text. In fact, we prove a somewhat stronger result, Proposition \ref{prop:charac} below, from which the said theorem follows as a corollary.

\medskip

In the following, we use the symbol $(K^{i,j|a}_{M})^\dagger$ as a shorthand for $K^{ji|a}_{M}$. The goal of this section is to prove the following result.
\begin{prop}
\label{prop:charac}
For $e\in\{1,...,k\}$, let $\{c^{ij}_e(s):i,j=1,...,d,s\in {\cal O}^2\}\subset \CC$ be such that 
\begin{align}
&\sum_{a=1}^{k}c_e^{mn}(K^{i,j|a}_{M})=\delta_{ij}c^{mn}_e(\mathbb{1}),\label{cond_compl_out}\\
&\sum_{l=1}^{d}c^{mn}_e((K^{l,i|a}_M)^\dagger K^{l,j|b}_M)=\delta_{ab}c^{mn}_e(K^{i,j|b}_M)\label{cond_orth_out}.
\end{align}
and the matrix
\begin{equation}
\Gamma_e:=\sum_{s,t\in {\cal O}}\sum_{i,j=1}^dc^{ij}_e(ts^\dagger)\ket{i}\bra{j}\otimes \ket{s}\bra{t}
\end{equation}
be positive semidefinite. Then, there exist a Hilbert space $\tilde{{\cal H}}$, non-normalized quantum states states $\{\tilde{\sigma}^e_{SM}\}_e\subset B(\CC^d)\otimes B(\tilde{{\cal H}})$ and operators $\{\tilde{K}^{ij|a}_M\}\subset B(\tilde{{\cal H}})$ such that
\begin{equation}
\tilde{\Pi}^a_{SM}:=\sum_{i,j=1}^d \ket{i}\bra{j}\otimes\tilde{K}^{i,j|a}_M,\quad \forall a\in\{1,...,k\}
\end{equation}
defines a projective measurement and
\begin{equation}
\tr(\ket{j}\bra{i}\otimes \tilde{t}\tilde{s}^\dagger\tilde{\sigma}^e_{SM})=c_e^{ij}(ts^\dagger),
\label{rel_with_cs}
\end{equation}
for $s,t\in {\cal O}$.
 
\end{prop}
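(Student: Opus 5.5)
The plan is a Gram-vector (GNS-type) reconstruction, carried out separately for each $e$ and then glued together by a direct sum. Since $\Gamma_e\succeq 0$, I will factor it as a Gram matrix. That is, I will pick a finite-dimensional space $W_e$ and vectors $w^e_{i,s}\in W_e$, for $i\in\{1,\dots,d\}$ and $s\in\mathcal O$, such that the $((i,s),(j,t))$ entry of $\Gamma_e$ equals $\langle w^e_{i,s}|w^e_{j,t}\rangle$.

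Inside $\CC^d\otimes W_e$ I will then consider two kinds of vectors. The first is the ``state'' vector
\[
\ket{\psi_e}:=\sum_i \ket{i}\otimes w^e_{i,\mathbb{1}} .
\]
The second is the family of candidate ``post-measurement'' vectors $\ket{\phi^a_e}$. These are built from the vectors $w^e_{l,K^{l,j|a}_M}$, recombined with the $S$-index so that, formally, $\ket{\phi^a_e}$ represents $\Pi^a_{SM}\ket{\psi_e}$ written through Eq.~\eqref{eq:proj_meas}. I will choose the exact pairing of indices so that the inner products $\langle\psi_e|(\ket{m}\bra{n}\otimes \mathbb{1})|\phi^a_e\rangle$ reproduce $c^{mn}_e(K^{\cdot,\cdot|a}_M)$.

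The key computation is to read \eqref{cond_orth_out} and \eqref{cond_compl_out} as Gram identities among these vectors.
\begin{itemize}
\item Condition \eqref{cond_orth_out} should give $\langle \phi^a_e|\phi^b_e\rangle=\delta_{ab}\langle\psi_e|\phi^a_e\rangle$. The same holds for the ``sliced'' versions obtained by inserting $\ket{m}\bra{n}$ on $S$, so the $\phi^a_e$ (and their slices) span mutually orthogonal subspaces.
\item Condition \eqref{cond_compl_out} should give $\sum_a \ket{\phi^a_e}=\ket{\psi_e}$.
\end{itemize}
Combining the two yields $\|\psi_e-\phi^a_e\|^2$-type identities showing that $\psi_e-\phi^a_e\perp\phi^a_e$.

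With these identities in hand, I will set $\tilde{\cal H}:=\bigoplus_e W_e$, possibly enlarged by a finite ancilla so that $S$ factors out correctly. I will define $\tilde\Pi^a_{SM}$ as the orthogonal projector onto the closed span of all sliced $\phi^a_e$-vectors, and assign the orthogonal complement of $\bigoplus_a$ of these subspaces to $a=1$. By construction these operators are orthogonal projectors summing to the identity.

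The orthogonality relations show that $\tilde\Pi^a_{SM}$ maps $\psi_e$ (and its slices) exactly to $\phi^a_e$. This gives $\tilde K^{ij|a}_M:=(\bra{i}\otimes\mathbb{1})\tilde\Pi^a_{SM}(\ket{j}\otimes\mathbb{1})$, which by construction satisfy Eq.~\eqref{eq:proj_meas}. Finally I will set $\tilde\sigma^e_{SM}:=\ket{\psi_e}\bra{\psi_e}$, embedded in the $e$-th summand. Relation \eqref{rel_with_cs} then follows entrywise by expanding $\tilde t\tilde s^\dagger$ and using the Gram identity $\langle w_{i,s}|w_{j,t}\rangle=\Gamma_e$. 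The resulting $\tilde{\cal H}$ is finite-dimensional, as Theorem~\ref{th:SDP_formulation} requires, and the theorem follows by substituting into the linear expressions of Appendix~\ref{App:mom_matr_formulation}.

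The main obstacle is the bookkeeping of the $S$-index. The operators $K^{ij|a}_M$ act only on $M$, whereas the natural object here is $\Pi^a_{SM}$. I must therefore make sure the projector I define factors as $\sum_{ij}\ket{i}\bra{j}\otimes\tilde K^{ij|a}_M$ with the $\tilde K$'s reproducing the moments $c^{mn}_e$ for all $m,n$, and not only the traced ones. If defining $\tilde\Pi^a$ directly on $\CC^d\otimes W_e$ fails to respect the tensor structure, my fallback is to let $\tilde{\cal H}:=\CC^d\otimes\bigoplus_e W_e$. The $S$-slices $\ket{m}\bra{n}$ are then realized on the copy of $\CC^d$ inside $\tilde{\cal H}$, with $\psi_e$ chosen as a suitable maximally-entangled-type vector between the physical $S$ and that copy. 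This is the step where I expect the index conventions ($ts^\dagger$ versus $s^\dagger t$) to need the most care.
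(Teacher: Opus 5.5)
Your architecture is the paper's. You Gram-decompose each $\Gamma_e$ and form a state vector and, for every slice, a post-measurement vector. You use the two constraints to show that the $a$-slices are mutually orthogonal and add up to the corresponding slice of the state. You then let $\tilde\Pi^a_{SM}$ project onto the span of the $a$-slices (complement absorbed into $a=1$), read off $\tilde K^{i,j|a}_M$ as blocks, and glue the $e$'s by a direct sum.

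However, the index issue you defer is an actual error in the setup, not bookkeeping. For any realization with $\tilde\sigma=\sum_k\proj{\chi_k}$ one has $\tr(\ket{j}\bra{i}\otimes\tilde t\tilde s^\dagger\tilde\sigma)=\langle v_{j,t}|v_{i,s}\rangle$ with $v_{i,s}:=\bigoplus_k(\bra{i}\otimes\tilde s^\dagger)\ket{\chi_k}$. So \eqref{rel_with_cs} requires the $v$'s to be Gram vectors of $\Gamma_e^T=\overline{\Gamma_e}$, whereas your $w^e_{i,s}$ are Gram vectors of $\Gamma_e$. Concretely, $\psi_e=\sum_i\ket{i}\otimes w^e_{i,\mathbb{1}}$ already gives $\tr_M\proj{\psi_e}=\sum_{i,j}c^{ji}_e(\mathbb{1})\ket{i}\bra{j}$. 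This is the transpose of the required reduced state, whatever you later do with the $\phi$'s. Carried through consistently, your construction outputs $c^{ji}_e(st^\dagger)=\overline{c^{ij}_e(ts^\dagger)}$ in place of $c^{ij}_e(ts^\dagger)$.

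Placing the vectors $w^e_{i,\mathbb{1}}$ at permuted slots cannot repair the $s=t=\mathbb{1}$ block. Your maximally-entangled fallback addresses a non-issue, since every operator on $\CC^d\otimes W_e$ already has the block form $\sum_{i,j}\ket{i}\bra{j}\otimes\tilde K^{i,j}$. The paper's remedy is an antilinear step: set $\ket{\psi_m,s^\dagger}:=\overline{\ket{m,s}}$. Then $c^{mn}(t^\dagger s)=\braket{\psi_n,t}{\psi_m,s}$, and $\ket{\psi_m,s}$ plays the role of $\tilde s\ket{\tilde\psi_m}$. The state is $\sum_m\ket{m}\ket{\psi_m,\mathbb{1}}$ and the slices are $\ket{\Phi^j_{m,a}}=\sum_i\ket{i}\ket{\psi_m,K^{i,j|a}_M}$. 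Equivalently, you can run your construction as is and complex-conjugate $\psi_e$ and all $\tilde K$'s entrywise at the end.

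A smaller point concerns the vector identity. \eqref{cond_compl_out} alone does not give $\sum_a\phi^a_e=\psi_e$, because it only controls inner products of the difference with state-type vectors. Also, $\psi_e-\phi^a_e\perp\phi^a_e$ is weaker than what you need, namely $\psi_e-\phi^a_e\perp H_a$ for the whole span $H_a$ of $a$-slices. The step that closes the argument is the slice-wise norm computation $\bigl\|\ket{i}\ket{\psi_m,\mathbb{1}}-\sum_a\ket{\Phi^i_{m,a}}\bigr\|^2=c^{mm}(\mathbb{1})-\sum_a c^{mm}(K^{i,i|a}_M)=0$. It uses \eqref{cond_orth_out} for the quadratic terms and \eqref{cond_compl_out} at the end. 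Together with $H_a\perp H_b$ it yields $\tilde\Pi^a_{SM}\ket{i}\ket{\psi_m,\mathbb{1}}=\ket{\Phi^i_{m,a}}$, hence $\tilde K^{j,k|a}_M\ket{\psi_m,\mathbb{1}}=\ket{\psi_m,K^{j,k|a}_M}$ for every $m$. This is what gives \eqref{rel_with_cs} for all $i,j$, not just the traced moments.
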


To prove the proposition, we will make use of the following lemma.
\begin{lemma}
\label{lemma:charac}
Let $\{c^{ij}(s):i,j=1,...,d,s\in {\cal O}^2\}\subset\CC$ satisfy the constraints
\begin{align}
&\sum_{a=1}^kc^{mn}(K^{i,j|a}_{M})=\delta_{ij}c^{mn}(\mathbb{1}),\label{cond_compl}\\
&\sum_{l=1}^{d}c^{mn}((K^{l,i|a}_M)^\dagger K^{l,j|b}_M)=\delta_{ab}c^{mn}(K^{i,j|b}_M)\label{cond_orth}.
\end{align}
Let the matrix
\begin{equation}
\Gamma=\sum_{s,t\in {\cal O}}\sum_{i,j=1}^dc^{ij}(ts^\dagger)\ket{i}\bra{j}\otimes \ket{s}\bra{t}
\end{equation}
be positive semidefinite. Then, there exist a Hilbert space $\tilde{{\cal H}}$ of dimension $d|{\cal O}|$, a non-normalized quantum state $\tilde{\sigma}\in B(\CC^d)\otimes B(\tilde{{\cal H}})$ and operators $\{\tilde{K}^{i,j|a}_M\}_{i,j,a}\subset B(\tilde{{\cal H}})$ such that
\begin{equation}
\tilde{\Pi}^a_{SM}:=\sum_{i,j=1}^d \ket{i}\bra{j}\otimes\tilde{K}^{i,j|a}_M,\quad \forall a\in \{1,...,k\}
\label{exp_projs}
\end{equation}
defines a projective measurement and
\begin{equation}
\tr(\ket{j}\bra{i}\otimes \tilde{t}\tilde{s}^\dagger\tilde{\sigma}_{SM})=c^{ij}(ts^\dagger),
\label{phys_real}
\end{equation}
for $s,t\in {\cal O}$.

\end{lemma}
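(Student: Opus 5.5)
The plan is a GNS-type construction adapted to the $S$-block structure, in the spirit of \cite{navascues2014characterization}. Since $\Gamma\succeq 0$, factor it as a Gram matrix: there are vectors $\ket{v_{i,s}}$, $i=1,\dots,d$, $s\in{\cal O}$, in a Hilbert space $\mathcal{K}$ of dimension at most $d|{\cal O}|$ with $\langle v_{i,s}|v_{j,t}\rangle=c^{ij}(ts^\dagger)$. We identify $\tilde{\cal H}:=\mathcal{K}$ and read $\ket{v_{i,s}}$ as "$s^\dagger$ applied to the $i$-th component of the state''. The candidate state is the unnormalized vector
\[
\ket{\phi}:=\sum_i \ket{i}_S\otimes\ket{\bar v_{i,\mathbb{1}}}
\]
in $\CC^d\otimes(\tilde{\cal H}\oplus\tilde{\cal H}')$, with $\tilde\sigma:=\proj{\phi}$. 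Here we allow conjugation and index conventions to be adjusted so that $\tr[(\ket{j}\bra{i}\otimes \mathbb{1})\tilde\sigma]=c^{ij}(\mathbb{1})$.

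Next define the operators on $V:=\mathrm{span}\{\ket{v_{i,s}}\}$. For $s=K^{i,j|a}$, the natural choice is to let $\tilde K^{l,m|a}$ act by
\[
\tilde K^{l,m|a}\ket{v_{m',\mathbb{1}}}\sim \ket{v_{\cdot,K}}
\]
on the "level-one'' vectors, with the $S$-indices routed so that $\tilde\Pi^a$ maps $\ket{\phi}$ to $\sum_{i}\ket{i}\otimes\ket{v_{i,(\cdot)}}$. Concretely, set $\ket{w_a}:=\sum_{i,j}\ket{i}\otimes\ket{v_{j,K^{i,j|a}}}$, with an appropriate index placement. Using \eqref{cond_orth} and \eqref{cond_compl}, compute the Gram data
\[
\langle w_a|w_b\rangle=\delta_{ab}\langle\phi|w_a\rangle,\qquad \sum_a \ket{w_a}=\ket{\phi}.
\]
The second identity is proved by showing that the norm of the difference vanishes: expand $\|\sum_a w_a-\phi\|^2$ and use the two conditions. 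Hence the vectors $\ket{w_a}$ are mutually orthogonal and sum to $\ket{\phi}$.

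Now let $\tilde\Pi^a$ be the projector onto a subspace containing $\ket{w_a}$, with these subspaces mutually orthogonal and summing to the identity on $\CC^d\otimes\tilde{\cal H}$. Any operator on $\CC^d\otimes\tilde{\cal H}$ can be written in the block form \eqref{exp_projs}, which then defines $\tilde K^{i,j|a}$ by $\tilde K^{i,j|a}:={}_S\!\bra{i}\tilde\Pi^a\ket{j}_S$. This makes $\tilde\Pi^a$ a genuine PVM automatically. The space may need to be enlarged by an orthogonal complement to complete the decomposition, which is why $\tilde{\cal H}'$ is present; the dimension bound can then be checked or relaxed.

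The remaining task is \eqref{phys_real}: the moments $\tr(\ket{j}\bra{i}\otimes\tilde t\tilde s^\dagger\tilde\sigma)$ for $s,t\in{\cal O}$ must match $c^{ij}(ts^\dagger)$. This is the main obstacle. The problem is that $\tilde\Pi^a$ fixes $\tilde\Pi^a\ket{\phi}$, but a moment with a single $S$-block entry $\tilde K^{i,j|a}$ requires $(\mathbb{1}\otimes\tilde K^{i,j|a})$ acting on $\ket{\phi}$. That is not $\tilde\Pi^a\ket{\phi}$; it is a "partial'' action.

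To handle this, I would purify the $S$ system with a copy $S'$. Take
\[
\ket{\phi}=\sum_{i}\ket{i}_{S}\ket{i}_{S'}\otimes\ket{\cdot},
\]
and put $S'$ inside $M$, so that $\tilde{\cal H}=\CC^d\otimes\mathcal K$. With this choice, $(\ket{j}\bra{i}\otimes \mathbb{1})$ on $S$ can be transported onto $\mathcal K$-vectors. One then chooses the projector subspaces to contain the vectors $(\ket{m}\bra{n}_S\otimes\mathbb 1)\tilde\Pi^a\ket{\phi}$, i.e. the whole spans $\{\ket{v_{\cdot,K^{\cdot,\cdot|a}}}\}$ tensored appropriately. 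Their mutual orthogonality across different $a$ and completeness follow again from \eqref{cond_orth} and \eqref{cond_compl}, now applied for all $m,n$; this is exactly why those conditions are imposed for every $m,n$. Verifying that $\tilde s^\dagger$ applied to the state reproduces $\ket{v_{i,s}}$ for every $s\in{\cal O}$ then gives \eqref{phys_real} by the Gram identity. I expect the index bookkeeping in this step to be the delicate part.
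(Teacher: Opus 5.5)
Your Gram factorization, the state $\ket{\phi}=\sum_m\ket{m}\otimes\ket{\psi_m,\mathbb{1}}$ (writing $\ket{\psi_m,s}:=\overline{\ket{v_{m,s^\dagger}}}$, so that $\braket{\psi_n,t}{\psi_m,s}=c^{mn}(t^\dagger s)$), and your identities for $\ket{w_a}$ are all correct. You also correctly diagnose why $\ket{w_a}$ is not enough. The gap is in the remedy, which is where the whole proof lives, and as written it fails on two counts.

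\emph{The copy $S'$ destroys coherences.} Any vector of the form $\sum_i\ket{i}_S\ket{i}_{S'}\otimes\ket{\chi_i}$ has a diagonal $S$-marginal. So the moments $\tr(\ket{j}\bra{i}\otimes\mathbb{1}\,\tilde\sigma)=c^{ij}(\mathbb{1})$ with $i\neq j$ cannot be reproduced. Moreover, $\tilde{\mathcal H}=\CC^d\otimes\mathcal K$ would in general exceed the claimed dimension $d|\mathcal O|$.

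\emph{The vectors you name are the wrong ones.} You ask the ranges of the projectors to contain $(\ket{m}\bra{n}_S\otimes\mathbb{1})\tilde\Pi^a\ket{\phi}\simeq\ket{m}\otimes\sum_j\ket{\psi_j,K^{n,j|a}}$. In a genuine model these need not lie in the range of $\tilde\Pi^a$, because $\ket{m}\bra{n}\otimes\mathbb{1}$ does not commute with $\tilde\Pi^a$. Their orthogonality for $a\neq b$ would also require $\sum_{j,j'}c^{j'j}((K^{n,j|a})^\dagger K^{n',j'|b})=0$ for fixed $n,n'$. But \eqref{cond_orth} only controls sums over a shared first index $l$. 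Your phrase ``the whole spans, tensored appropriately'' points in the right direction, but the crux is exactly which vectors span $H_a$.

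No extra register is needed. The missing idea is to prescribe $\tilde\Pi^a$ on every product vector $\ket{i}\otimes\ket{\psi_m,\mathbb{1}}$, not only on $\ket{\phi}$.
\begin{itemize}
\item Define $\ket{\Phi^i_{m,a}}:=\sum_l\ket{l}\otimes\ket{\psi_m,K^{l,i|a}}$, the would-be $\tilde\Pi^a(\ket{i}\otimes\ket{\psi_m,\mathbb{1}})$.
\item Because the output $S$-index is now summed in inner products, \eqref{cond_orth} gives exactly $\braket{\Phi^i_{m,a}}{\Phi^j_{n,b}}=\delta_{ab}c^{nm}(K^{i,j|b})$. Hence the spaces $H_a:=\mathrm{span}_{i,m}\{\ket{\Phi^i_{m,a}}\}$ are mutually orthogonal.
\item Your own norm-expansion argument, run for each fixed $(i,m)$, gives $\ket{i}\ket{\psi_m,\mathbb{1}}=\sum_a\ket{\Phi^i_{m,a}}$.
\item Let $\tilde\Pi^a$ be the projector onto $H_a$, absorbing the orthogonal complement into $a=1$. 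This all happens inside $\CC^d\otimes\tilde{\mathcal H}$, so no enlargement is needed. Then $\tilde\Pi^a\ket{i}\ket{\psi_m,\mathbb{1}}=\ket{\Phi^i_{m,a}}$.
\item It follows that $\tilde K^{l,i|a}\ket{\psi_m,\mathbb{1}}=(\bra{l}\otimes\mathbb{1})\tilde\Pi^a\ket{i}\ket{\psi_m,\mathbb{1}}=\ket{\psi_m,K^{l,i|a}}$. This is precisely the ``partial action'' you were after.
\item \eqref{phys_real} then follows from the Gram identity together with $(\tilde K^{i,j|a})^\dagger=\tilde K^{j,i|a}$.
\end{itemize}
Your $\ket{w_a}$ is just $\sum_m\ket{\Phi^m_{m,a}}$, a single combination of these $d^2$ vectors. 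That is why it could not pin down the individual blocks $\tilde K^{l,i|a}$.
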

The intuition of the proof of this lemma is that $\sigma_{SM}$ can be chosen to be of the form $\sigma_{SM}=\ket{\tilde{\psi}}\bra{\tilde{\psi}}$, with
\begin{equation}
\ket{\psi}=\sum_{m=1}^d\ket{m}\ket{\tilde{\psi}_m}.
\end{equation}
If there exist projector operators of the form (\ref{exp_projs}) and relation (\ref{phys_real}) is satisfied, then we have that:
\begin{equation}
c^{mn}(ts^\dagger)=\bra{\tilde{\psi}_n}\tilde{t}\tilde{s}^\dagger\ket{\tilde{\psi}_m}.
\label{bare_overlaps}
\end{equation}
In the proof, we will use a guess on $\{\tilde{s}\ket{\psi_m}:s\in{\cal O}\}$ to construct the projector operators $\{\tilde{\Pi}^a_{SM}\}_a$.
\begin{proof}
Since $\Gamma$ is positive semidefinite, it admits a Gram decomposition, i.e., there exist vectors $\{\ket{m, s}:m=1,...,d,s\in {\cal O}\}\subset \CC^{d|{\cal O}|}=:\tilde{{\cal H}}$ such that
\begin{equation}
c^{mn}(ts^\dagger)=\braket{m,s}{n,t}.
\end{equation}
Comparing this expression with Eq. (\ref{bare_overlaps}) suggests one to define the vectors
\begin{equation}
\ket{\psi_m, s^\dagger}:=\overline{\ket{m,s}},
\end{equation}
which therefore satisfy
\begin{equation}
c^{mn}(t^\dagger s)=\braket{\psi_n, t}{\psi_m,s}.
\label{overlaps_good_vectors}
\end{equation}
Intuitively, $\ket{\psi_m,s}$ would represent the vector $\tilde{s}\ket{\tilde{\psi}_m}$ alluded to above; accordingly, the vector
\begin{equation}
\ket{\Phi^j_{m,a}}:=\sum_i\ket{i}\ket{\psi_m,K_M^{i,j|a}}
\end{equation}
would correspond to $\tilde{\Pi}^a_{SM}\ket{j}\ket{\tilde{\psi}_m}$. In this regard, note that
\begin{align}
&\braket{\Phi^i_{m,a}}{\Phi^j_{n,b}}=\sum_{l}\braket{\psi_m, K^{l,i|a}_M}{\psi_n, K^{l,j|b}_M}\nonumber\\
&=\sum_{l}c^{nm}((K^{l,i|a}_M)^\dagger K^{l,j|b}_M)\nonumber\\
&=\delta_{ab}c^{nm}(K^{i,j|b}_M);
\end{align}
in particular, $\ket{\Phi_{m,a}^i}\perp \ket{\Phi^j_{n,b}}$ for $a\not=b$. For $a=1,...,k$, the subspaces
\begin{equation}
H_a:=\mbox{span}\{\ket{\Phi^i_{m,a}}:m,i\}    
\end{equation}
are therefore pair-wise orthogonal; this suggests defining the projectors
\begin{equation}
\tilde{\Pi}_{SM}^a:=\mbox{proj}\left(H_a\right),\quad \forall a\in \{1,...,k\}.
\end{equation}
The orthogonality relations between the subspaces $\{H_a\}_a$ guarantee that $\{\tilde{\Pi}_{SM}^a\}_a$ are either a complete or an undercomplete set of projectors. In the latter case, we redefine $\tilde{\Pi}^1_{SM}:=\mathbb{1}-\sum_{a\not=1}\tilde{\Pi}_{SM}^a$.

Now, consider the vector
\begin{align}
&\ket{i}\ket{\psi_m,\mathbb{1}}-\sum_a\ket{\Phi^i_{m,a}}.
\end{align}
This vector is null. Indeed, its norm squared is
\begin{align}
&\braket{\psi_m,\mathbb{1}}{\psi_m,\mathbb{1}}+\sum_a\braket{\Phi^i_{m,a}}{\Phi^i_{m,a}}-\sum_a\bra{i}\bra{\psi_m,\mathbb{1}}\ket{\Phi^i_{m,a}}-\sum_a\bra{\Phi^i_{m,a}}\ket{i}\ket{\psi_m,\mathbb{1}}\nonumber\\
&=\braket{\psi_m,\mathbb{1}}{\psi_m,\mathbb{1}}+\sum_a\sum_j\braket{\psi_m,K_M^{j,i|a}}{\psi_m,K_M^{j,i|a}}-
\sum_a\braket{\psi_m,\mathbb{1}}{\psi_m,K_M^{i,i|a}}-\sum_a\braket{\psi_m,K_M^{i,i|a}}{\psi_m,\mathbb{1}}\nonumber\\
&=c^{mm}(\mathbb{1})+\sum_a\sum_jc^{mm}((K_M^{j,i|a})^\dagger K_M^{j,i|a})-\sum_ac^{mm}(K_M^{i,i|a})-\sum_ac^{mm}(K_M^{i,i|a})\nonumber\\
&=c^{mm}(\mathbb{1})+\sum_ac^{mm}(K_M^{i,i|a})-2\sum_ac^{mm}(K_M^{i,i|a})\nonumber\\
&=c^{mm}(\mathbb{1})-\sum_ac^{mm}(K_M^{i,i|a})=0,
\end{align}
where we made use of relations (\ref{cond_compl}), (\ref{cond_orth}).
Thus, for all $m,i$, it holds that
\begin{equation}
\ket{i}\ket{\psi_m,\mathbb{1}}=\sum_{a}\ket{\Phi^i_{m,a}}.  \end{equation}
By the orthogonality relations $\{H_a\perp H_b:a\not=b\}$, this implies that
\begin{align}
&\tilde{\Pi}^a_{SM}\ket{i}\ket{\psi_m,\mathbb{1}}=\tilde{\Pi}_{SM}^a\sum_b\ket{\Phi^i_{m,b}}\nonumber\\
&=\ket{\Phi^i_{m,a}}.
\label{action_proj}
\end{align}
Let $\{\tilde{K}^{ij|a}_{SM}\}_{a,i,j}\subset B(\tilde{{\cal H}})$ be such that
\begin{equation}
\tilde{\Pi}_{SM}^a=\sum_{i,j}\ket{i}\bra{j}\otimes \tilde{K}^{i,j|a}_M,
\end{equation}
and define
\begin{equation}
\ket{\tilde{\psi}}:=\sum_m\ket{m}\ket{\psi_m,\mathbb{1}}.
\end{equation}
Then we have that
\begin{align}
&\bra{m}\otimes \tilde{K}^{j,k|a}_M\ket{\tilde{\psi}}\nonumber\\
&=(\bra{j}\otimes\mathbb{1}) \tilde{\Pi}^a_{SM}(\ket{k}\bra{m}\otimes \mathbb{1})\ket{\tilde{\psi}}\nonumber\\
&=(\bra{j}\otimes\mathbb{1}) \tilde{\Pi}^a_{SM}\ket{k}\ket{\psi_m,\mathbb{1}}\nonumber\\
&=(\bra{j}\otimes\mathbb{1}) \ket{\Phi^k_{m,a}}\nonumber\\
&=\ket{\psi_m,K^{j,k|a}_M},
\label{action_Ks}
\end{align}
where we invoked eq. (\ref{action_proj}) and the definition of $\ket{\Phi^i_{a,m}}$ in the last two lines.

Finally, define
\begin{equation}
\tilde{\sigma}_{SM}:=\proj{\tilde{\psi}}.
\end{equation}
By eqs. (\ref{overlaps_good_vectors}) and (\ref{action_Ks}), for all $s,t\in {\cal O}$ it holds that
\begin{align}
&\tr(\ket{n}\bra{m}\otimes \tilde{t}\tilde{s}^\dagger\tilde{\sigma}_{SM})\nonumber\\
&=\bra{\tilde{\psi}}(\ket{n}\otimes\tilde{t})(\bra{m}\otimes\tilde{s}^\dagger)\ket{\tilde{\psi}}\nonumber\\
&=\braket{\psi_n,\tilde{t}^\dagger}{\psi_m,\tilde{s}^\dagger}=c^{mn}(ts^\dagger).
\end{align}
\end{proof}
Now we are ready to prove the proposition.
\begin{proof}[Proof of Proposition \ref{prop:charac}] 
By Lemma \ref{lemma:charac}, there exists a Hilbert space $\hat{{\cal H}}$ of dimension $d|{\cal O}|$ such that, for $e=1,...,k$, there is a set of operators $\{\hat{s}(e):s\in {\cal O}^2\}\subset B(\hat{{\cal H}})$ and a non-normalized state $\hat{\sigma}^e_{SM'}\subset B(\CC^d\otimes \hat{{\cal H}})$ with the properties:
\begin{enumerate}
    \item $\tr(\ket{j}\bra{i}\otimes \hat{t}(e)\hat{s}(e)^\dagger\hat{\sigma}_e)=c^{ij}_e(ts^\dagger)\;\,\forall s,t\in {\cal O}$.
    \item The operators
    \begin{equation}
    \tilde{\Pi}_{SM}^a:=\sum_{i,j=1}^d\ket{i}\bra{j}\otimes\tilde{K}^{i,j|a}_M(e),\quad \forall a\in \{1,...,k\}
    \end{equation}
    define a complete projective measurement.
\end{enumerate}

We next integrate all the operators above in an extended Hilbert space $\tilde{{\cal H}}=\hat{{\cal H}}\otimes \CC^{k}$, through the relation
\begin{equation}
\tilde{t}:=\sum_{e=1}^{k}\hat{t}(e)\otimes\proj{e}.
\end{equation}
Similarly, we define the states
\begin{equation}
\tilde{\sigma}^e_{SM}:=\hat{\sigma}^e_{SM'}\otimes \proj{e}.
\end{equation}
It can then be verified that the operators
\begin{equation}
\tilde{\Pi}^a_{SM}=\sum_{i,j}\ket{i}\bra{j}\otimes \tilde{K}^{i,j|a}_M=\sum_e\hat{\Pi}_{SM'}^a(e)\otimes \proj{e}
\end{equation}
form a complete system of projectors. Moreover, Eq. (\ref{rel_with_cs}) holds, since, for all $t,s\in {\cal O}$,
\begin{align}
&\tr(\ket{j}\bra{i}\otimes \tilde{t}\tilde{s}^\dagger\tilde{\sigma}^e_{SM})\nonumber\\
&=\tr(\ket{j}\bra{i}\otimes \hat{t}(e)\hat{s}^\dagger(e)\hat{\sigma}_{SM'}^e)=c_e^{ij}(ts^\dagger).
\end{align}

\end{proof}

\begin{proof}[Proof of Theorem \ref{th:SDP_formulation}] 
Let $\{\Gamma_e\}_e$, with 
\begin{equation}
\Gamma_e:=\sum_{s,t\in {\cal O}}\sum_{i,j=1}^dc^{ij}_e(ts^\dagger)\ket{i}\bra{j}\otimes \ket{s}\bra{t}
\end{equation}
be a feasible solution to Eq. \eqref{eq:pguess_matrices} achieving an objective value $c$.
By identifying the coefficients $c_e^{i,j}(ts^\dagger)=\bra{i}\bra{s}\Gamma_e\ket{j}\ket{t}$, we can see that the constraints in the SDP formulation directly enforce the hypotheses of Proposition~\ref{prop:charac}. Specifically
\begin{enumerate}
    \item The completeness constraint $\sum_a\bra{m}\!\bra{\mathbb{1}}\,
    \Gamma_e\,
    \ket{n}\!|K_{M}^{i,j|a}\rangle = \delta_{ij}\bra{m}\!\bra{\mathbb{1}}\,
    \Gamma_e\,
    \ket{n}\!|\mathbb{1}\rangle$ translates to~\eqref{cond_compl_out}.
    \item The projectivity constraint $\sum_l \bra{m}\langle (K_M^{l,i|a})^\dagger|\,
    \Gamma_e\,
    \ket{n}\!|K_M^{l,j|b}\rangle =\delta_{ab}\bra{m}\!\bra{\mathbb{1}}\,
    \Gamma_e\,
    \ket{n}\!|K_M^{i,j|a}\rangle$ translates to~\eqref{cond_orth_out}.
    \item The constraint $\Gamma_e \succeq 0$ ensures the matrices are positive semidefinite.
\end{enumerate}
    Since all conditions of Proposition~\ref{prop:charac} are met, there exists a Hilbert space $\tilde{\mathcal{H}}$, subnormalized states $\{\tilde{\sigma}_{SM}^{e}\}_e$, and operators $\{\tilde{K}_M^{i,j|a}\}_a$ forming a valid projective measurement $\tilde{\Pi}^a_{SM}=\sum_{i,j=1}^d\ket{i}\bra{j}\otimes \tilde{K}_M^{i,j|a}$ such that
    \begin{equation}
        \Tr[(\ket{j}\bra{i}\otimes\tilde{t}\tilde{s}^\dagger)\tilde{\sigma}^e_{SM}]=c^{ij}_e(ts^\dagger).
    \end{equation}
    for $s,t\in\mathcal{O}$.
    
    It directly follows from Equations~\eqref{eq:objective}-\eqref{eq:constr_statistics}, that these constructed states and measurements form a feasible solution to the original non-commutative polynomial optimization problem in Eq.~\eqref{eq:pguess_SM} and that achieves the same optimal value $c$. 
    \end{proof} 
\end{document}